\def\var{\text{Var}}
\def\U{\mathcal{U}}
\def\tr{\textrm{tr}}
\newtheorem{thm}{Theorem}
\newtheorem{lem}{Lemma}
\begin{document}

\title{Real-time covariance estimation for the local level model}

\author{K. Triantafyllopoulos\\ {\it Department of
Probability and Statistics, University of Sheffield, UK} \\ Email: {\tt kostas@sheffield.ac.uk} }

\date{\today}

\maketitle

\begin{abstract}

This paper develops on-line inference for the multivariate local level model, with the focus being placed on covariance estimation of the innovations. We assess the application of the inverse Wishart prior distribution in this context
  and find it too restrictive since the serial correlation structure of the observation
  and state innovations is forced to be the same. We generalize the inverse Wishart distribution to allow for a more convenient correlation structure, but still retaining approximate conjugacy. We prove some relevant results for the new distribution and we develop approximate Bayesian inference, which allows simultaneous forecasting of time series data and estimation of the covariance of the innovations of the model. We provide results on the steady state of the level of the time series, which are deployed to achieve computational savings. Using Monte Carlo experiments, we compare the proposed methodology with existing estimation procedures.  An example with real data consisting of production data from an industrial process is given.

\textit{Some key words:} multivariate time series, covariance estimation, adaptive estimation, dynamic linear models, multivariate control charts.

\end{abstract}

\section{Introduction}\label{model}

Let $\{y_t\}$ be a $p\times 1$ vector process,
generated from the state space model:
\begin{equation}\label{model2}
y_t=\theta_t+\epsilon_t \quad \textrm{and} \quad \theta_t=\phi
\theta_{t-1}+\omega_t,
\end{equation}
where $\theta_t$ is the conditional level of $y_t$, $\phi$ is a scalar
hyperparameter, and the innovation vectors $\epsilon_t$ and
$\omega_t$ follow $p$-variate
Gaussian distributions $\epsilon_t\sim N_p(0,\Sigma)$ and
$\omega_t\sim N_p(0,\Omega)$, for some covariance matrices
$\Sigma$ and $\Omega$, and for some integer $p\geq 1$. It is assumed that the sequences
$\{\epsilon_t\}$ and $\{\omega_t\}$ are individually and mutually
uncorrelated and they are also uncorrelated with the initial state
$\theta_0$, which follows a $p$-variate Gaussian distribution too. For $\phi=1$ the
above model gives the popular local level model, known also as
random walk plus noise model or as steady forecasting model, which is extensively covered in Harvey
(1986, 1989) and in West and Harrison (1997). If $\phi$ lies inside the unit circle, but $\phi\neq 0$, then
(\ref{model2}) can be interpreted as a vector autoregressive model (L\"utkepohl, 2005) with common structure over the component time series. In this paper we focus on the local level model, but the choice $\phi\neq 1$ may allow some small flexibility around it, for example considering nearly local level when $\phi\approx 1$.

Despite its simplicity, the local level model can be used to analyze real data sets in various settings and scenarios, as it has been pointed out by many authors, see e.g. Durbin (2004, p. 6). In the context of model (\ref{model2}) with $\phi\approx 1$, $\theta_t$ is referred to as the conditional level or simply level of $y_t$, since $E(y_t|\theta_t)=\theta_t$ and then $\theta_t$ is local as $\theta_t\approx \theta_{t-1}$, where $E(.)$ denotes expectation. The local level model has been used to analyze the volume of the river Nile (Pole {\it et al.}, 1994, \S7.1; Durbin and Koopman, 2001, \S2.2.2), market research data for a drug development (West and Harrison, 1997, \S2.3), temperature data for assessing global warming (Shumway and Stoffer, 2006, \S6.1), and annual precipitation at Lake Superior (Petris {\it et al.}, 2009, \S3.2.1). A detailed account of the local level model in econometrics, including many examples, is given in Commandeur and Koopman (2007, Chapters 1-7). Furthermore, local level models play a significant role to financial econometrics as they form basic components for unit root tests (Kwiatkowski, 1992). Finally, as pointed out by Triantafyllopoulos (2006), model (\ref{model2}) is a generalization of the Shewhart-Deming model for quality control, and it can be deployed in the context of multivariate control charts for autocorrelated processes (Bersimis {\it et al.}, 2007), where the aim is to signal deviations from the mean and the covariance matrix of these processes.

A central problem associated with inference of model (\ref{model2}), is
the specification or estimation of the covariance matrices $\Sigma$
and $\Omega$. For the estimation of these matrices there are several algorithms based on direct likelihood maximization (Harvey, 1986, 1989) and in particular using analytical score functions (Shephard and Koopman, 1992). Iterative methods for indirect likelihood maximization are also available, e.g. Newton-Raphson algorithms (Shephard and Koopman, 1992; Shumway and Stoffer, 2006, \S6.3) and expectation maximization (EM) algorithms (Dempster {\it et al.}, 1977; Shumway and Stoffer, 1982; Koopman, 1993); Fahrmeir and Tutz (2001, \S8.1.2) and Shumway and Stoffer (2006, \S6.3) have detailed discussions of these algorithms with useful references and recursive versions of the EM algorithm is also possible for on-line application. Simulation based methods, such as Markov chain Monte Carlo (MCMC) (Carter and Kohn, 1994; Gamerman and Lopes, 2006) and sequential Monte Carlo methods (Doucet {\it et al.}, 2001) are also available. Although in recent years the advance in computing power has resulted in sophisticated simulation based and iterative estimation procedures, such as those discussed in the above references, it is still desirable to develop inference that is not based on simulation or on iterative methods, in particular for enabling fast statistical analysis of high dimensional data and sequential model monitoring in real-time (Harrison and West, 1991). The need for real-time estimation has been pointed out in Cantarelis and Johnston (1982) and in many references in machine learning and signal processing, see e.g. Haykin (2001) and Malik (2006). Furthermore, MCMC and maximum likelihood based methods, as those mentioned above, are effectively designed for a ``static'' or in-sample application where a complete set of data is available and the interest is focused on smoothing, rather than on forecasting. Instead, our interest is centred on sequential or ``dynamic'' application, where each time we collect a new observation vector $y_t$ we update the estimates or predictions in an adaptive way.  For the remainder of this paper we discuss approximate conjugate estimation procedures, but in section \ref{examples} we also consider the EM algorithm for comparison purposes.

Assuming that $\Omega$ is proportional to $\Sigma$ in
the sense that $\Omega=w \Sigma$, for some scalar $w>0$,
learning for $\Sigma$ is possible either by adopting Bayesian
methods, considering a Wishart prior for $\Sigma^{-1}$ (West and
Harrison, 1997), or by adopting maximum likelihood estimation
procedures (Harvey, 1986; 1989). The above
proportional structure of $\Sigma$ and $\Omega$ can be seen as a
matrix generalization of the proportionality in the univariate case
$(p=1)$ that leads to the scale observational dynamic model (West
and Harrison, 1997; Triantafyllopoulos and Harrison, 2008), but when
$p>1$ it imposes the restrictive assumption that the correlation
matrix of $\omega_t$ is equal to the correlation matrix of
$\epsilon_t$. This limitation can be understood by noting that the above model belongs to the relatively restricted class of ``seemingly unrelated time series equations'' (SUTSE) (Harvey, 1989, \S8.2), which is a time series extension of the
``seemingly unrelated regression equation models''
(Zellner, 1962). In our opinion, efforts devoted to the estimation of
the above models have been focused primarily on mathematical convenience, and the correlation structure problem mentioned above appears to have been overlooked.

The purpose of this paper is to develop an on-line estimation procedure for adaptive and fast estimation of $\Sigma$ and forecasting
of $\{y_t\}$. The adaptive estimation methods proposed in this paper, may allow for
analysis of high dimensional data, although in this paper this is only briefly explored via Monte Carlo experiments. In order to achieve the above goal we propose the deterministic
specification of $\Omega$ as $\Omega=\Sigma^{1/2}W\Sigma^{1/2}$,
where $\Sigma^{1/2}$ denotes the symmetric square root of $\Sigma$
and $W$ is a covariance matrix to be specified. In our development, $W$ is initially assumed known, but we propose an application of the Newton-Rapshon method for adaptive estimation of this matrix in real problems. We observe that when
$W=wI_p$, where $I_p$ is the $p\times p$ identity matrix, then
$\Omega=w\Sigma$ (leading to $\Omega$ being proportional to
$\Sigma$), but when $W$ is not proportional to $I_p$, then the
correlation matrices of $\epsilon_t$ and $\omega_t$ are different.
Thus we extend the SUTSE models of Harvey (1989, \S8.2) and West and Harrison (1997, \S16.4) to
allow for a more general covariance setting. For estimation purposes,
we deploy approximate Bayesian inference, by adopting a prior
distribution for $\Sigma$ which leads to a generalization of the
inverse Wishart distribution. We provide convergence results of the
posterior covariance matrix of $\theta_t$ leading to the steady
state of $\theta_t$ and this is used in the estimation algorithm in
order to increase its computational speed.

The remaining of the paper is organized as follows. Section \ref{s2s1} generalizes the inverse Wishart distribution and discusses
some properties of the new distribution. In section \ref{s2s2} we develop approximate Bayesian inference for model (\ref{model2}) and section
\ref{examples} includes two illustrations, consisting of Monte Carlo experiments that compare and contrast the performance of our algorithms with existing methods in the literature and an example of monitoring a 5-dimensional process in quality control. Finally, section \ref{discussion} gives concluding comments.

\section{Generalized inverse Wishart distribution}\label{s2s1}

Let $X\sim IW_p(n,A)$ denote that the matrix $X$ follows an
inverse Wishart distribution with $n$ degrees of freedom and with
parameter matrix $A$. Given $A$, we use the notation $|A|$ for the
determinant of $A$, $\textrm{tr}(A)$ for the trace of $A$, and $\textrm{exp}(x)$ for the
exponent of the scalar $x$. The following theorem introduces a generalization of the inverse Wishart distribution.

\begin{thm}\label{lem1}
Consider the $p\times p$ random covariance matrix $X$ and denote
with $X^{1/2}$ the symmetric square root of $X$. Given $p\times p$
covariance matrices $A$ and $S$ and a positive scalar $n>2p$, define
$Y=X^{1/2}S^{-1}X^{1/2}$ so that $Y$ follows an inverse Wishart
distribution $Y\sim IW_p(n,A)$. Then the density function of $X$ is
given by
$$
p(X)=\frac{|A|^{(n-p-1)/2}|S|^{(n-p-1)/2}}{2^{p(n-p-1)/2}\Gamma_p\{(n-p-1)/2\}
|X|^{n/2}}\exp(\tr (-AX^{-1/2}SX^{-1/2}/2)),
$$
where $\Gamma_p(.)$ denotes the multivariate gamma function.
\end{thm}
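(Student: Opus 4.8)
The plan is to obtain the density of $X$ from the known inverse Wishart density of $Y = X^{1/2}S^{-1}X^{1/2}$ by a change of variables. First I would write down the density of $Y\sim IW_p(n,A)$ explicitly, namely
$$
p(Y)=\frac{|A|^{(n-p-1)/2}}{2^{p(n-p-1)/2}\Gamma_p\{(n-p-1)/2\}}\,|Y|^{-n/2}\exp(\tr(-AY^{-1}/2)),
$$
valid on the cone of positive definite matrices. Then I substitute $Y^{-1}=X^{-1/2}SX^{-1/2}$ (using that $(X^{1/2})^{-1}=X^{-1/2}$ and $S$ is symmetric) directly into the exponential term, which immediately produces the factor $\exp(\tr(-AX^{-1/2}SX^{-1/2}/2))$ appearing in the claim. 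For the determinant factor, note $|Y| = |X^{1/2}|\,|S^{-1}|\,|X^{1/2}| = |X|\,|S|^{-1}$, so $|Y|^{-n/2} = |X|^{-n/2}|S|^{n/2}$.

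**Handling the Jacobian.** The substantive step is the Jacobian of the map $X\mapsto Y=X^{1/2}S^{-1}X^{1/2}$ on the space of symmetric $p\times p$ matrices (dimension $p(p+1)/2$). I would factor this map as a composition: $X\mapsto X^{1/2}=:R$, then $R\mapsto R S^{-1} R =: Y$. For the second map, a standard matrix-Jacobian result (see e.g.\ Muirhead, \emph{Aspects of Multivariate Statistical Theory}) gives that for the transformation $R\mapsto RBR$ with $B$ symmetric positive definite and $R$ symmetric, the Jacobian is $|R|^{p+1}$ up to a factor depending only on $B$; taking $B=S^{-1}$ contributes a power of $|S|$. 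Composing with the Jacobian of $X\mapsto X^{1/2}$ and collecting all the determinant powers, the exponents of $|X|$, $|S|$, and $|A|$ must combine to give exactly the stated $|A|^{(n-p-1)/2}|S|^{(n-p-1)/2}|X|^{-n/2}$ form.

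**Main obstacle.** The hard part will be bookkeeping the Jacobian of the symmetric square root together with the congruence transformation and checking that the powers of $|S|$ and $|X|$ reconcile exactly with the claimed density; an arithmetic slip in any exponent is the likeliest source of error. An alternative that sidesteps the square-root Jacobian is to verify the claim indirectly: integrate the proposed $p(X)$ over the positive definite cone and check it equals $1$ by reversing the substitution, or equivalently start from $\int p(Y)\,dY = 1$ and transform the integral rather than the density. I would cross-check the final normalizing constant against the special case $S = I_p$, where $X = Y$ and the formula must collapse to the ordinary $IW_p(n,A)$ density (note the mild reparametrization relative to the usual convention, with degrees-of-freedom shift absorbed into the exponent $(n-p-1)/2$), and against $p=1$, where everything reduces to a scalar inverse-gamma computation.
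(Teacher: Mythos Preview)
Your overall change-of-variables strategy matches the paper's: start from the known $IW_p(n,A)$ density of $Y$, substitute $|Y|=|X|\,|S|^{-1}$ and $Y^{-1}=X^{-1/2}SX^{-1/2}$, and multiply by the Jacobian of $X\mapsto Y=X^{1/2}S^{-1}X^{1/2}$. The substitution steps are fine. The gap is in your Jacobian computation.

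The claim that ``for the transformation $R\mapsto RBR$ with $B$ symmetric positive definite and $R$ symmetric, the Jacobian is $|R|^{p+1}$ up to a factor depending only on $B$'' is incorrect. The standard Muirhead-type result you have in mind is for the \emph{linear} congruence $X\mapsto CXC'$ with $C$ fixed and $X$ varying (Jacobian $|C|^{p+1}$). The map $R\mapsto RBR$ is \emph{quadratic} in the variable $R$; its differential is the Sylvester operator $dY=(dR)BR+RB(dR)$, whose determinant is $\prod_{i\le j}(\mu_i+\mu_j)$ with $\mu_i$ the eigenvalues of $BR$ (or an equivalent object). Even in the scalar case $p=1$ one has $dY/dR=2BR$, not $R^2$ times a function of $B$ alone. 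So this step, as written, fails. It is true that after composing with the square-root Jacobian the eigenvalue products collapse to the simple answer $|dY/dX|=|S|^{-(p+1)/2}$, independent of $X$, but that cancellation is precisely the nontrivial content and is not visible from the argument you sketched.

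The paper avoids wrestling with these eigenvalue products by an indirect trick that you almost stumbled onto but used only as a sanity check rather than as the engine of the proof. Since the map $X\mapsto Y$ does not involve $A$, the Jacobian is the same for every $A$; hence it suffices to identify it in the special case $A=I_p$. There the cyclic trace identity gives $\tr(X^{-1/2}SX^{-1/2})=\tr(SX^{-1})$, so the claimed density reduces to the ordinary $IW_p(n,S)$ density, while $Y\sim IW_p(n,I_p)$. Comparing these two known densities forces $J(X\to Y)=|S|^{(p+1)/2}$, and this value then carries over to general $A$, after which one verifies normalization by transforming back to $Y$. Your instinct to cross-check against $S=I_p$ (or the symmetric role of $A$ and $S$) is exactly the right idea---promote it from a check to the main argument and the proof goes through cleanly without any direct eigenvalue bookkeeping.
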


\begin{proof}
From Olkin and
Rubin (1964) the determinant of the Jacobian matrix of $X$ with
respect to $Y$ is $J(Y\rightarrow X)=J(Y\rightarrow
X^{1/2})J(X^{1/2}\rightarrow X)=\prod_{i\leq
j}^p(\lambda_i+\lambda_j)(\xi_i+\xi_j)^{-1}$, where
$\lambda_1,\ldots,\lambda_p$ are the eigenvalues of
$S^{-1/2}X^{1/2}S^{-1/2}$ and $\xi_1,\ldots,\xi_p$ are the
eigenvalues of $X^{1/2}$. We observe that if $A=I_p$, then $p(X)$ is
an inverse Wishart distribution, since
$\textrm{tr}(-X^{-1/2}SX^{-1/2}/2)=\textrm{tr}(-SX^{-1}/2)$. The
Jacobian $J(Y\rightarrow X)$ does not depend on $A$ and so we can
determine $J(Y\rightarrow X)$ from the special case of $A=I_p$. With
$A=I_p$, $X\sim IW_p(n,S)$ and $Y\sim IW_p(n,I_p)$ and from the
transformation $Y=X^{1/2}S^{-1}X^{1/2}$ we get
$$
p(Y)=\frac{|S|^{(n-p-1)/2}\exp(\textrm{tr}(-Y^{-1}/2))J(X\rightarrow
Y)}{2^{p(n-p-1)/2}\Gamma_p\{(n-p-1)/2\}|S|^{n/2}|Y|^{n/2}}.
$$
Since $Y\sim IW_p(n,I_p)$ it must be
$|S|^{-n/2}|S|^{(n-p-1)/2}J(X\rightarrow Y)=1$ and so
$J(X\rightarrow Y)=|S|^{(p+1)/2}$.

Now, in the general case of a covariance matrix $A$, we see
$$
\int_{X>0} p(X)\,dX=\int_{Y>0}
\frac{|A|^{(n-p-1)/2}}{2^{p(n-p-1)/2}\Gamma_p\{(n-p-1)/2\}|Y|^{n/2}}
\exp(\textrm{tr}(-AY^{-1}/2))\,dY=1,
$$
since $Y\sim IW_p(n,A)$.
\end{proof}

The distribution of the above theorem proposes a generalization of
the inverse Wishart distribution, since if $A=I_p$ we have $X\sim
IW_p(n,S)$ and if $S=I_p$, we have $X\sim IW_p(n,A)$. This is
a new generalization of the
inverse Wishart distribution, differing clearly from the generalizations of Dawid and Lauritzen (1993), Brown
{\it et al.} (1994), Roverato (2002), and Carvalho and West (2007).
In the following we refer to the distribution of Theorem \ref{lem1}
as {\it generalized inverse Wishart} distribution, and we write
$X\sim GIW_p(n,A,S)$. The next result gives some expectations of
the $GIW$ distribution.

\begin{thm}\label{lem2a}
Let $X\sim GIW_p(n,A,S)$ for some known $n,A$ and $S$. Then we have
\begin{enumerate}
\item [(a)] $E(X^{1/2}S^{-1}X^{1/2})=(n-2p-2)^{-1}A$;
$E(X^{-1/2}SX^{-1/2})=(n-p-1)A^{-1}$; \item [(b)]
$E|X|^{\ell}=2^{-p\ell} \prod_{i=1}^p\prod_{j=1}^\ell \{(n-p-i)/2-j\}^{-1} |A|^{\ell}|S|^{\ell}$,
\end{enumerate}
where $E(.)$ denotes expectation and $0<\ell<(n-2p)/2$.
\end{thm}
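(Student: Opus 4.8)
The plan is to reduce both parts to standard moment identities for the Wishart and inverse Wishart distributions, applied to the auxiliary matrix $Y=X^{1/2}S^{-1}X^{1/2}$, which by the very definition of $X\sim GIW_p(n,A,S)$ satisfies $Y\sim IW_p(n,A)$. Matching the density of $Y$ obtained in the proof of Theorem \ref{lem1} with the classical parametrization of the inverse Wishart shows that, in textbook notation, $Y$ is inverse Wishart with $n-p-1$ degrees of freedom and scale matrix $A$, so that $Y^{-1}$ is Wishart with $n-p-1$ degrees of freedom and scale matrix $A^{-1}$. Fixing this degrees-of-freedom convention at the outset is the one piece of bookkeeping that must be done carefully, since the shift from $n$ to $n-p-1$ has to be tracked consistently through the rest of the computation.

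For part (a) I would note that $X^{1/2}S^{-1}X^{1/2}=Y$ and, because $X^{1/2}$ is symmetric and nonsingular, $X^{-1/2}SX^{-1/2}=(X^{1/2}S^{-1}X^{1/2})^{-1}=Y^{-1}$. Then $E(X^{1/2}S^{-1}X^{1/2})=E(Y)=(n-2p-2)^{-1}A$ is just the mean of the inverse Wishart, and $E(X^{-1/2}SX^{-1/2})=E(Y^{-1})=(n-p-1)A^{-1}$ is the mean of the corresponding Wishart; both are standard.

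For part (b) the key observation is the determinant identity $|Y|=|X^{1/2}|\,|S^{-1}|\,|X^{1/2}|=|X|/|S|$, so that $|X|^{\ell}=|S|^{\ell}|Y|^{\ell}$ and hence $E|X|^{\ell}=|S|^{\ell}E|Y|^{\ell}=|S|^{\ell}E|Y^{-1}|^{-\ell}$. I would then invoke the formula for the moments of the determinant of a Wishart matrix, namely $E|Z|^{h}=2^{ph}|V|^{h}\Gamma_p(\nu/2+h)/\Gamma_p(\nu/2)$ for $Z\sim W_p(\nu,V)$, valid when $\nu/2+h>(p-1)/2$, applied with $\nu=n-p-1$, $V=A^{-1}$ and $h=-\ell$. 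Expanding each multivariate gamma function via $\Gamma_p(a)=\pi^{p(p-1)/4}\prod_{i=1}^p\Gamma(a-(i-1)/2)$ and using $\Gamma(x-\ell)/\Gamma(x)=\prod_{j=1}^{\ell}(x-j)^{-1}$ for integer $\ell$ converts the gamma ratio into $\prod_{i=1}^p\prod_{j=1}^{\ell}\{(n-p-i)/2-j\}^{-1}$, which together with the factors $2^{-p\ell}|A|^{\ell}|S|^{\ell}$ is exactly the claimed expression. The existence condition $\nu/2-\ell>(p-1)/2$ for the Wishart determinant moment is precisely $\ell<(n-2p)/2$, which explains the stated range of $\ell$.

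The main obstacle is purely organisational rather than conceptual: getting the inverse Wishart parametrisation right so that the density in Theorem \ref{lem1} is read off as $IW_p$ with parameter pair $(n-p-1,A)$ in the classical sense, and then threading that shift through the gamma-function algebra without sign or index errors. An alternative to quoting the external Wishart determinant-moment identity would be to evaluate $E|X|^{\ell}$ directly by integrating $|X|^{\ell}p(X)$ against the $GIW$ density of Theorem \ref{lem1} and recognising the integrand, up to a constant, as another $GIW$ density; this is self-contained but merely reproduces the same normalising-constant calculation.
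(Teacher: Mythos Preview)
Your proposal is correct. Part (a) is identical to the paper's argument: both simply read off the standard inverse Wishart and Wishart means from $Y=X^{1/2}S^{-1}X^{1/2}\sim IW_p(n,A)$ and $Y^{-1}$.

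For part (b) there is a mild difference in route. The paper takes the ``self-contained'' alternative you describe at the end: it integrates $|X|^{\ell}p(X)$ directly, recognises the integrand as an unnormalised $GIW_p(n-2\ell,A,S)$ density, and obtains $E|X|^{\ell}$ as the ratio $c/c^{*}$ of the two normalising constants, which is then simplified via $\Gamma_p\{(n-p-1)/2\}/\Gamma_p\{(n-2\ell-p-1)/2\}$. Your primary route instead passes through the determinant identity $|X|^{\ell}=|S|^{\ell}|Y|^{\ell}$ and the known Wishart determinant moment for $Y^{-1}\sim W_p(n-p-1,A^{-1})$. The two computations are equivalent --- both reduce to the same multivariate gamma ratio --- but the paper's version avoids citing an external moment formula, while yours is arguably tidier once that formula is granted. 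Your derivation of the range $0<\ell<(n-2p)/2$ from the Wishart moment's validity condition matches the paper's constraint $n-2\ell>2p$.
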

\begin{proof}
First we prove (a). From the proof of Theorem \ref{lem1} we have that\\
$Y=X^{1/2}S^{-1}X^{1/2}\sim IW_p(n,A)$ and so
$E(Y)=(n-2p-2)^{-1}A$ and $E(Y^{-1})=(n-p-1)A^{-1}$. Proceeding
with (b) we note from the proof of Theorem \ref{lem1} that for any
$n>2p$
$$
\int_{X>0}|X|^{-n/2}\exp(\textrm{tr}(-AX^{-1/2}SX^{-1/2}/2))\,dX=c^{-1},
$$
where $c$ is the normalizing constant of the distribution of $X$.
Then
$$
E|X|^{\ell}=c\int_{X>0}|X|^{-(n-2\ell)/2}\exp(\textrm{tr}(-AX^{-1/2}SX^{-1/2}/2))\,dX=\frac{c}{c^*},
$$
where
$$
c^*=\frac{2^{p\ell}|A|^{(n-p-1)/2}|S|^{(n-p-1)/2}}{2^{p(n-p-1)/2}|A|^{\ell}|S|^{\ell}\Gamma_p\{(n-2\ell-p-1)/2\}}
$$
and the range of $\ell$ makes sure that $n-2\ell>2p$. The result
follows by eliminating the factor $2^{p(n-p-1)/2}$ in the fraction
$c/c^*$, and by noting that from well known properties of the multivariate gamma function we have
$$
\Gamma_p\left(\frac{n-p-1}{2}\right)=\prod_{i=1}^p\Gamma\left(\frac{n-p-i}{2}\right)=\prod_{i=1}^p\prod_{j=1}^\ell \left(\frac{n-p-i}{2}-j\right)\Gamma\left(\frac{n-p-i}{2}-\ell\right)
$$
where $\Gamma(.)$ denotes the gamma function.
\end{proof}
The following property reflects on the symmetry of $A$ and $S$ in
the $GIW$ distribution.
\begin{thm}\label{th:commute}
If $X\sim GIW_p(n,A,S)$, for some known $n$, $A$ and $S$, then
$X\sim GIW_p(n,S,A)$.
\end{thm}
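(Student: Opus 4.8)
The plan is to show that the density of $X\sim GIW_p(n,A,S)$ is invariant under the interchange $A\leftrightarrow S$. Looking at the density from Theorem~\ref{lem1},
$$
p(X)=\frac{|A|^{(n-p-1)/2}|S|^{(n-p-1)/2}}{2^{p(n-p-1)/2}\Gamma_p\{(n-p-1)/2\}|X|^{n/2}}\exp\bigl(\tr(-AX^{-1/2}SX^{-1/2}/2)\bigr),
$$
every factor except the exponential is manifestly symmetric in $A$ and $S$: the prefactor contains $|A|^{(n-p-1)/2}|S|^{(n-p-1)/2}$ and $|X|^{n/2}$, none of which changes when $A$ and $S$ are swapped. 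So the whole claim reduces to the single identity
$$
\tr\bigl(AX^{-1/2}SX^{-1/2}\bigr)=\tr\bigl(SX^{-1/2}AX^{-1/2}\bigr).
$$

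First I would record that $X^{-1/2}$ is a fixed symmetric matrix (the symmetric square root of $X^{-1}$), so writing $B=X^{-1/2}$ the quantity in the exponent is $\tr(ABSB)$. Then I would invoke the cyclic property of the trace: $\tr(ABSB)=\tr(BABS)=\tr(SBAB)$, the last equality being another cyclic rotation. Since $B=B^{\mathsf T}$ and $A,S$ are symmetric, one can equally present this as $\tr\bigl((BAB)S\bigr)=\tr\bigl(S(BAB)\bigr)=\tr\bigl((BSB)A\bigr)$ after transposing, but the bare cyclic rotation already suffices and is the cleanest route. Hence the exponents agree, the prefactors agree, and $p(X)$ written with parameters $(n,A,S)$ coincides with $p(X)$ written with $(n,S,A)$; therefore $X\sim GIW_p(n,S,A)$.

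There is essentially no obstacle here — the only thing to be slightly careful about is that the construction in Theorem~\ref{lem1} singles out $A$ (via $Y=X^{1/2}S^{-1}X^{1/2}\sim IW_p(n,A)$) in an apparently asymmetric way, so one should note explicitly that the \emph{density} is nonetheless symmetric, which is exactly what the trace identity delivers; the defining stochastic representation need not be symmetric for the resulting law to be. Equivalently, one could remark that $X^{-1/2}AX^{-1/2}\sim IW_p(n,S)$ follows from the same density symmetry, giving the dual stochastic characterization, but this is a corollary rather than something needed for the proof.
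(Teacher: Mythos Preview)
Your proposal is correct and follows essentially the same approach as the paper: observe that the normalizing constant is symmetric in $A$ and $S$, and use the cyclic property of the trace to get $\tr(AX^{-1/2}SX^{-1/2})=\tr(SX^{-1/2}AX^{-1/2})$. The paper's proof is just a terser version of exactly this argument.
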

\begin{proof}[Proof of Theorem \ref{th:commute}]
Suppose that $X\sim GIW_p(n,A,S)$. From the normalizing constant of
the density $f(X)$ of Theorem \ref{lem1}, we can exchange the roles
of $|A|$ and $|S|$. And from $\textrm{tr}(-AX^{-1/2}S$ $\times
X^{-1/2}/2) =\textrm{tr}(-SX^{-1/2}AX^{-1/2}/2)$ we have that $X\sim
GIW_p(n,S,A)$.
\end{proof}

Next we show that the mode of a $GIW$ distribution can be obtained by the solution of a
matrix equation. First we give the following lemma.

\begin{lem}\label{lem:trace}
If $X$ is a $p\times p$ real-valued symmetric matrix of variables and $A,S$ are $p\times p$ symmetric
matrices of constants, then
\begin{equation}
\frac{\partial tr(AXSX)}{\partial X} =2\{K+K'-diag(k_{11},\ldots,k_{pp})\}, \label{eq:lem:trace}
\end{equation}
where $K=AXS$ and $K=(k_{ij})_{i,j=1,\ldots,p}$.
\end{lem}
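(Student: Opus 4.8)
The plan is to reduce (\ref{eq:lem:trace}) to a short matrix-calculus computation: first compute the gradient pretending the entries of $X$ are unconstrained, then correct for the symmetry of $X$. To get the unconstrained gradient I would take the differential of $f(X)=\tr(AXSX)$, using the product rule and the cyclic invariance of the trace,
$$
df=\tr\bigl(A(dX)SX\bigr)+\tr\bigl(AXS(dX)\bigr)=\tr\bigl((SXA+AXS)\,dX\bigr).
$$
Since $A$, $S$ and $X$ are symmetric, $SXA=(AXS)'=K'$, so $df=\tr\bigl((K+K')\,dX\bigr)$ with $K=AXS$; if all $p^2$ entries of $X$ were free this would already give $\partial f/\partial X=K+K'$.

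Next I would impose $x_{ji}=x_{ij}$, so that only the $p(p+1)/2$ entries $\{x_{ij}:i\le j\}$ are varied, and collect terms. Writing $M=K+K'$ and $df=\sum_{i,j}m_{ij}\,dx_{ji}$, substituting $dx_{ji}=dx_{ij}$ and relabelling one of the two off-diagonal sums yields $df=\sum_i m_{ii}\,dx_{ii}+\sum_{i<j}(m_{ij}+m_{ji})\,dx_{ij}$, hence $\partial f/\partial X=2M-\mathrm{diag}(m_{11},\dots,m_{pp})$. Because $K$ and $K'$ have the same diagonal, $m_{ii}=2k_{ii}$, and therefore $2M-\mathrm{diag}(m_{11},\dots,m_{pp})=2\{K+K'-\mathrm{diag}(k_{11},\dots,k_{pp})\}$, which is exactly (\ref{eq:lem:trace}). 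The same bookkeeping can equivalently be carried out entrywise starting from $\tr(AXSX)=\sum_{a,b,c,d}A_{ab}x_{bc}S_{cd}x_{da}$, observing that $x_{ij}$ occurs in this sum both as $x_{bc}$ and as $x_{da}$.

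I expect the main obstacle to be the correct handling of the symmetry constraint: an off-diagonal free entry $x_{ij}$ occupies two positions of $X$, so its derivative absorbs both the $(i,j)$ and $(j,i)$ parts of $K+K'$, whereas a diagonal entry occupies only one, which is precisely why the diagonal of $K+K'$ is subtracted once. I would therefore spell out that step explicitly rather than quote a matrix-differentiation table. The auxiliary identity $SXA=K'$, which relies on $A$, $S$ and $X$ all being symmetric, should likewise be noted, since it is what lets the answer be written through $K$ alone.
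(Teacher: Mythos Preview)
Your proof is correct. The paper establishes the same identity by the entrywise route you mention at the end: it writes $\partial X/\partial x_{ij}$ as $u_iu_i'$ on the diagonal and $u_iu_j'+u_ju_i'$ off it, then evaluates $\tr(A(\partial X/\partial x_{ij})SX)+\tr(AXS(\partial X/\partial x_{ij}))$ directly to obtain $2k_{ii}$ for $i=j$ and $2k_{ij}+2k_{ji}$ for $i\neq j$. Your differential-first organisation --- compute the unconstrained gradient $K+K'$ from $df=\tr((SXA+AXS)\,dX)$ and then double the off-diagonal entries to account for $x_{ji}=x_{ij}$ --- is a cleaner packaging of the same computation, and the identity $SXA=K'$ you single out is precisely what the paper is using implicitly when it collapses $u_i'SXAu_i+u_i'AXSu_i$ to $2k_{ii}$.
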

\begin{proof}
Let $X=(x_{ij})_{i,j=1,\ldots,p}$ and thus
$$
\frac{\partial tr(AXSX)}{\partial x_{ij}} = tr\left(A\frac{\partial X}{\partial x_{ij}} SX\right) +
tr\left(AXS \frac{\partial X}{\partial x_{ij}}\right).
$$
Now let $u_j$ be the $j$-th column vector of the identity matrix $I_p$ (a zero vector having one unit in its $j$-th position).
For $i=j$ we have
\begin{equation}
\frac{\partial tr(AXSX)}{\partial x_{ii}} = tr(Au_iu_i'SX) + tr(AXSu_iu_i') = u_i'SXAu_i+u_i'AXSu_i=2k_{ii}, \label{lem:partial1}
\end{equation}
where $u_i'$ denotes the row vector of $u_i$.

For $j<i$ we have
\begin{eqnarray}
\frac{\partial \textrm{tr}(AXSX)}{\partial x_{ij}} &=& \textrm{tr}(SXAu_ju_i') + \textrm{tr}(AXSu_ju_i') + \textrm{tr}(SXAu_iu_j')+\textrm{tr}(AXSu_iu_j') \nonumber \\ &=& 2k_{ji} + 2k_{ij}. \label{lem:partial2}
\end{eqnarray}
Putting together (\ref{lem:partial1}) and (\ref{lem:partial2}) we obtain (\ref{eq:lem:trace}).
\end{proof}

\begin{thm}\label{th:mode}
The mode $\widehat{X}$ of $X\sim GIW(n,A,S)$ satisfies the matrix equation
\begin{equation}\label{mode}
A\widehat{X}^{-1/2}S+S\widehat{X}^{-1/2}A=2n\widehat{X}^{1/2}.
\end{equation}
Furthermore, $\widehat{X}$ is unique, i.e. $GIW$ is a unimodal distribution.
\end{thm}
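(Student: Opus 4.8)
The plan is to maximise the logarithm of the density of Theorem~\ref{lem1}. Discarding the normalising constant, this amounts to maximising
$$
g(X) = -\frac{n}{2}\log|X| - \frac12\tr\bigl(AX^{-1/2}SX^{-1/2}\bigr)
$$
over the cone of positive definite symmetric matrices $X$. The term $X^{-1/2}$ makes direct differentiation in $X$ awkward, so I would first substitute $W = X^{-1/2}$, which is a smooth bijection of the cone onto itself; since this is a reparametrisation of an optimisation (not of an integral) no Jacobian is involved. Using $|X| = |W|^{-2}$, the objective becomes the much more tractable
$$
h(W) = n\log|W| - \frac12\tr(AWSW).
$$

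Next I would compute the stationarity condition for $h$ in the symmetric-matrix calculus. For the first term $\partial\log|W|/\partial W = 2W^{-1} - \mathrm{diag}(W^{-1})$, and for the second Lemma~\ref{lem:trace} gives $\partial\tr(AWSW)/\partial W = 2\{K + K' - \mathrm{diag}(k_{11},\dots,k_{pp})\}$ with $K = AWS$; since $A$ and $S$ are symmetric, $K' = SWA$. Setting $\partial h/\partial W = 0$ therefore equates two symmetric matrices each of the form $2M - \mathrm{diag}(M)$, namely with $M = nW^{-1}$ and with $M = \frac12(AWS + SWA)$. Comparing off-diagonal entries forces the off-diagonal parts of these two matrices $M$ to agree, and then the diagonal entries agree as well, so $nW^{-1} = \frac12(AWS + SWA)$. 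Substituting back $W = X^{-1/2}$, hence $W^{-1} = X^{1/2}$, yields $2n\widehat X^{1/2} = A\widehat X^{-1/2}S + S\widehat X^{-1/2}A$, which is~(\ref{mode}).

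For the uniqueness claim I would show that $h$ is strictly concave and coercive on the open cone of positive definite symmetric matrices. Strict concavity holds because $\log|W|$ is strictly concave there, while $\tr(AWSW) = \|S^{1/2}WA^{1/2}\|_F^2$ is the squared Frobenius norm of an injective linear image of $W$ (as covariance matrices, $A$ and $S$ are positive definite, so $A^{1/2}$ and $S^{1/2}$ are invertible), hence strictly convex, so that $-\frac12\tr(AWSW)$ is strictly concave. Coercivity follows from the bound $\tr(AWSW) \ge \lambda_{\min}(A)\lambda_{\min}(S)\sum_i \mu_i^2$, where $\mu_1,\dots,\mu_p$ are the eigenvalues of $W$, which gives $h(W) \le \sum_i\{n\log\mu_i - \tfrac{c}{2}\mu_i^2\}$ with $c > 0$; the right-hand side tends to $-\infty$ both as $W$ approaches a singular matrix and as $\|W\| \to \infty$. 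Hence $h$ attains its maximum at an interior point, which by strict concavity is its unique stationary point and unique maximiser; pulling back through $X = W^{-2}$ shows $\widehat X$ is the unique mode, so $GIW$ is unimodal.

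The step I expect to cause the most trouble is the differentiation: one must be careful with the symmetric-matrix convention so that the diagonal corrections in $\partial\log|W|/\partial W$ and in Lemma~\ref{lem:trace} cancel consistently and the stationarity condition collapses exactly to the clean matrix equation~(\ref{mode}) rather than to some entrywise variant of it. Everything else is a routine concavity argument.
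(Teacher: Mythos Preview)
Your derivation of the mode equation is essentially the same as the paper's: both substitute $W=X^{-1/2}$ (the paper phrases this as ``differentiating with respect to $X^{-1/2}$''), invoke Lemma~\ref{lem:trace} for the quadratic trace term, and observe that the diagonal corrections from the symmetric-matrix derivative of $\log|W|$ and from Lemma~\ref{lem:trace} cancel to leave the clean equation~(\ref{mode}).

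For uniqueness, however, you take a genuinely different route. The paper computes the Hessian of $\log p(X)$ with respect to $\mathrm{vech}(X^{-1/2})$, expresses it using the duplication matrix $D_p$ and Kronecker products, and shows it is negative definite at any critical point; it then argues separately, via a vectorisation identity $X^{-1}-Y^{-1}=-X^{-1}(X-Y)Y^{-1}$ and a Kronecker-product contradiction, that the matrix equation~(\ref{mode}) can have only one positive-definite solution. Your argument is more global and more economical: you note that $h(W)=n\log|W|-\tfrac12\tr(AWSW)$ is strictly concave on the cone (log-determinant is strictly concave; $\tr(AWSW)=\|S^{1/2}WA^{1/2}\|_F^2$ is a strictly convex quadratic) and coercive via the eigenvalue bound, so a unique interior maximiser exists and coincides with the unique stationary point. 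This buys you existence and uniqueness simultaneously without any Hessian bookkeeping or the algebraic uniqueness lemma; the paper's approach, by contrast, yields the Hessian explicitly, which could be useful elsewhere but is not needed for the statement at hand.
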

\begin{proof}
From the density of $X$ we have
$$
\log p(X)=c+n\log |X^{-1/2}|-\frac{1}{2} \textrm{tr}(AX^{-1/2}SX^{-1/2}).
$$
To find $\widehat{X}$ we need to maximize $\log p(X)$. Using Lemma \ref{lem:trace} we have
$$
\frac{\partial \log p(X)}{\partial X^{-1/2}} = n\left\{ 2X^{1/2}-diag\left(x_{11}^{(1/2)},\ldots,x_{pp}^{(1/2)}\right)\right\} - K-K'+diag(k_{11},\ldots,k_{pp}),
$$
where $X^{1/2}=\left\{x_{ij}^{(1/2)}\right\}_{i,j=1,\ldots,p}$, $K=AX^{-1/2}S$ and $K=(k_{ij})_{i,j=1,\ldots,p}$.

Now
$$
\frac{\partial \log p(\widehat{X})}{\partial \widehat{X}^{-1/2}} = 0 \Rightarrow K+K'=2n\widehat{X}^{1/2} \Rightarrow A\widehat{X}^{-1/2}S+S\widehat{X}^{-1/2}A=2n\widehat{X}^{1/2}.
$$
Next we show that at $X=\widehat{X}$ the second partial derivative of $\log p(X)$ is a negative definite matrix.
Let $x=\textrm{vech}(X^{1/2})$ and $x^*=\textrm{vech}(X^{-1/2})$, where $\textrm{vech}(.)$ denotes the vec permutation operator of symmetric matrices. Also, let $D_p$ be the duplication matrix and $H_p$ be any left inverse of it and denote with $\otimes$ the Kronecker product of two matrices. Then
\begin{eqnarray*}
\frac{\partial^2 \log p(X)}{\partial x^* \partial (x^*)'} &=& -n H_p (X^{1/2}\otimes X^{1/2}) D_p - \frac{n}{2} \frac{ \textrm{vech}\{ diag(x_{11}^{(1/2}),\ldots,x_{pp}^{(1/2)})\} }{\partial (x^*)' } \\ &=& -n H_p (X^{1/2}\otimes X^{1/2}) D_p - \frac{n}{2} \frac{ \textrm{vech}\{ diag(x_{11}^{(1/2}),\ldots,x_{pp}^{(1/2)})\} }{\partial x' } \frac{\partial x}{\partial (x^*)'} \\ &=& -n H_p (X^{1/2}\otimes X^{1/2}) D_p + \frac{n}{2} diag\{\textrm{vech}(I_p)\} H_p (X^{1/2}\otimes X^{1/2}) D_p \\ &=& -\frac{n}{2} \left[ 2I_{p(p+1)/2} -diag\{\textrm{vech}(I_p)\}\right] H_p (X^{1/2}\otimes X^{1/2}) D_p <0,
\end{eqnarray*}
since both $2I_{p(p+1)/2} -diag\{\textrm{vech}(I_p)\}$ and $H_p (X^{1/2}\otimes X^{1/2})D_p$ are positive definite matrices.

To prove the second part of the theorem, we write for simplicity $X=\widehat{X}^{1/2}$ so that from (\ref{mode}) we have $AX^{-1}S+SX^{-1}A=2nX$. For $Y \neq X$, with $AY^{-1}S+SY^{-1}A=2nY$, let $\textrm{vec}(.)$ denote the column stacking operator of an unrestricted matrix. Then $A(X^{-1}-Y^{-1})S+S(X^{-1}-Y^{-1})A=2n(X-Y)$ or $Y^{-1}\otimes X^{-1} \textrm{vec}(X-Y)=-2n(S\otimes A+A\otimes S)^{-1}\textrm{vec}(X-Y)$, which leads to the contradiction that $Y^{-1}\otimes X^{-1}$ is a negative definite matrix. Thus $X=Y$ and so the solution $\widehat{X}$ of (\ref{mode}) is unique.
\end{proof}
Some comments are in order.
\begin{enumerate}

\item If $S=I_p$, then $GIW$ is reduced to an inverted Wishart distribution $X\sim IW(n,A)$ with mode $\widehat{X}=A/n$ and this satisfies
equation (\ref{mode}).

\item If $S=\lambda A$, for some $\lambda>0$, then equation (\ref{mode}) can be solved analytically as $\widehat{X}=\lambda n^{-1}A^2$. To see this define $\widehat{Y}^2=\widehat{X}$ so that (\ref{mode}) is satisfied for $\widehat{Y}^{-1}=\sqrt{n/\lambda} A^{-1}$.

\item If $AS$ is symmetric (i.e. if $A$ and $S$ commute), then $\widehat{X}=AS/n$. To prove this first we show that if $A$ and $S$ commute, then $A^{1/2}$ and $S^{1/2}$ commute too. Indeed, assume that $A^{1/2}S^{1/2}\neq S^{1/2}A^{1/2}$, then $AS^{1/2}\neq A^{1/2}S^{1/2}A^{1/2}\neq S^{1/2}A^{1/2}A^{1/2}=S^{1/2}A$ and $AS\neq S^{1/2}AS^{1/2}\neq S^{1/2}S^{1/2}A=SA$, which is a contradiction. Now define $\widehat{Y}^2=\widehat{X}$ with $\widehat{Y}=A^{1/2}S^{1/2}/\sqrt{n}$ and then substitute $\widehat{Y}$ into (\ref{mode}), i.e.
$$
A\widehat{Y}^{-1}S+S\widehat{Y}^{-1}A=\sqrt{n}AA^{-1/2}S^{-1/2}S+\sqrt{n}SS^{-1/2}A^{-1/2}A=2\sqrt{n}A^{1/2}S^{1/2}=2n\widehat{Y}.
$$
Note that the cases (1) and (2) above, are embedded in (3).

\item In the general case we can obtain the solution of (\ref{mode}) numerically, by considering it
as a special case of generalized Sylvester matrix equations (Wu {\it et al.}, 2008).

\end{enumerate}

The next result proposes a way to obtain the unique solution of (\ref{mode}), avoiding numerical methods.
\begin{thm}\label{th:mode2}
The mode $\widehat{X}$ of Theorem \ref{th:mode} satisfies $\textrm{vec}(\widehat{X}^{1/2}\otimes \widehat{X}^{1/2})=c$, where
\begin{gather*}
c=2n(b\otimes B +d\otimes D)(b'b\otimes B^2+b'd\otimes BD +d'b\otimes DB+d'd\otimes D^2)^{-1}vec(I_p),
\end{gather*}
for $b=vec(S)$, $B=I_p\otimes A$, $d=vec(A)$ and $D=I_p\otimes S$.
\end{thm}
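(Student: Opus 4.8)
The plan is to reduce the target identity $\textrm{vec}(\widehat X^{1/2}\otimes\widehat X^{1/2})=c$ to a single quadratic matrix identity in $\widehat X^{1/2}$, vectorise that identity into a linear system whose coefficient matrix is $F'$ with $F=b\otimes B+d\otimes D$, and then recognise the matrix inverted in the definition of $c$ as the Gram matrix $F'F$, so that $c$ is exactly the normal-equations solution of the system. Concretely, I would first aim to establish from the mode equation (\ref{mode}) of Theorem \ref{th:mode} the quadratic identity
\begin{equation*}
A\widehat X^{1/2}S\widehat X^{1/2}+S\widehat X^{1/2}A\widehat X^{1/2}=2nI_p ,
\end{equation*}
by clearing the matrix inverse in (\ref{mode}). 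The whole statement then follows if $\textrm{vec}(\widehat X^{1/2}\otimes\widehat X^{1/2})$ is shown to be the vector that the formula for $c$ returns.

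The vectorisation step is the routine part and I would carry it out as follows. Writing $N=\widehat X^{1/2}\otimes\widehat X^{1/2}$ and using $\textrm{vec}(PQR)=(R'\otimes P)\,\textrm{vec}(Q)$ together with $\textrm{vec}(\widehat X^{1/2}S\widehat X^{1/2})=(\widehat X^{1/2}\otimes\widehat X^{1/2})\,\textrm{vec}(S)=Nb$, the first term becomes $\textrm{vec}(A\widehat X^{1/2}S\widehat X^{1/2})=(I_p\otimes A)Nb=BNb=(b'\otimes B)\,\textrm{vec}(N)$, and similarly the second term is $(d'\otimes D)\,\textrm{vec}(N)$. Hence the quadratic identity vectorises to $(b'\otimes B+d'\otimes D)\,\textrm{vec}(N)=2n\,\textrm{vec}(I_p)$, i.e.\ $F'\,\textrm{vec}(N)=2n\,\textrm{vec}(I_p)$ once one notes that symmetry of $A$ and $S$ gives $B'=B$, $D'=D$ and hence $F'=b'\otimes B+d'\otimes D$. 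The key algebraic observation is then that the matrix inverted in $c$ is precisely $F'F$: expanding $(b\otimes B+d\otimes D)'(b\otimes B+d\otimes D)$ with $B'=B$, $D'=D$ reproduces term by term
\begin{equation*}
b'b\otimes B^2+b'd\otimes BD+d'b\otimes DB+d'd\otimes D^2 ,
\end{equation*}
so that $c=2nF(F'F)^{-1}\textrm{vec}(I_p)$ is the minimum-norm solution of $F'x=2n\,\textrm{vec}(I_p)$, and $\textrm{vec}(N)$ is a solution of that very system.

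The main obstacle is the first step, and this is where genuine care is needed about which power of $\widehat X$ appears. Naively clearing the inverse in (\ref{mode}) by post-multiplying $A\widehat X^{-1/2}S+S\widehat X^{-1/2}A=2n\widehat X^{1/2}$ by $\widehat X^{-1/2}$ produces $A\widehat X^{-1/2}S\widehat X^{-1/2}+S\widehat X^{-1/2}A\widehat X^{-1/2}=2nI_p$, i.e.\ the displayed quadratic identity with $\widehat X^{-1/2}$ in place of $\widehat X^{1/2}$; since $\widehat X^{1/2}$ does not commute with $A$ or $S$, the two versions are not interchangeable. Establishing the identity in the form stated above (with $\widehat X^{1/2}$), rather than its inverse-root analogue, is therefore the crux, and I would verify the correct bookkeeping first in the scalar case $p=1$, where (\ref{mode}) gives $\widehat X=AS/n$ and every quantity can be checked by hand, before attempting the matrix case.

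A second, more structural obstacle arises for $p>1$. Because $F'$ is a $p^2\times p^4$ matrix, the system $F'x=2n\,\textrm{vec}(I_p)$ is underdetermined, so $c$ equals $\textrm{vec}(N)$ only when $\textrm{vec}(N)$ lies in the column space of $F$; in general $c=\Pi\,\textrm{vec}(N)$ with $\Pi=F(F'F)^{-1}F'$ the orthogonal projector onto $\mathrm{range}(F)$. I would therefore have to prove the range condition $\textrm{vec}(N)\in\mathrm{range}(F)$, exploiting the special Kronecker structure of $N$ and the symmetry of $A$ and $S$. The uniqueness of the mode from Theorem \ref{th:mode} guarantees there is a single $\widehat X$ to match, but it does not by itself force $\Pi$ to act as the identity on $\textrm{vec}(N)$, so confirming this membership — and thereby that the normal-equations formula recovers the structured Kronecker vector rather than merely its projection — is the remaining delicate point of the argument.
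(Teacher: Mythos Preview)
Your vectorisation strategy is precisely the paper's: clear the inverse in the mode equation, apply $\textrm{vec}$ twice to reach $F'\,\textrm{vec}(N)=2n\,\textrm{vec}(I_p)$ with $F=b\otimes B+d\otimes D$, then write $c=2nF(F'F)^{-1}\textrm{vec}(I_p)$.

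Your first obstacle is not a mathematical difficulty but a typographical slip in the theorem statement. The paper's own proof post-multiplies (\ref{mode}) by $\widehat X^{-1/2}$ exactly as you describe, obtaining $A\widehat X^{-1/2}S\widehat X^{-1/2}+S\widehat X^{-1/2}A\widehat X^{-1/2}=2nI_p$, and from this derives $\textrm{vec}(\widehat X^{-1/2}\otimes\widehat X^{-1/2})=c$. The discussion immediately following the theorem confirms this reading: it extracts $\widehat X^{-1/2}\otimes\widehat X^{-1/2}$ from $c$ and then computes $\textrm{vec}(\widehat X^{-1})=(\widehat X^{-1/2}\otimes\widehat X^{-1/2})\,\textrm{vec}(I_p)$. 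So the statement should read $\widehat X^{-1/2}$ throughout, and what you call the ``naive'' clearing step is in fact the intended one; there is nothing further to establish here.

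Your second obstacle is real, and the paper does not address it either. The paper simply asserts that $F'=b'\otimes B+d'\otimes D$ has full row rank $p^2$ and then writes the right-inverse formula as ``the solution'' of the underdetermined system, without verifying that the structured vector $\textrm{vec}(\widehat X^{-1/2}\otimes\widehat X^{-1/2})$ actually lies in $\mathrm{range}(F)$. As you correctly observe, full row rank of $F'$ guarantees only that $c$ is the minimum-norm solution, so in general $c=\Pi\,\textrm{vec}(N)$ with $\Pi$ the projector onto $\mathrm{range}(F)$. Your diagnosis of where the argument is incomplete is accurate; the paper's proof shares this gap rather than resolving it.
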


\begin{proof}
For simplicity we write $X=\widehat{X}^{1/2}$. Then equation (\ref{mode}) becomes $AX^{-1}SX^{-1}+SX^{-1}AX^{-1}=2nI_p$, which by taking the vectorized operator in both sides, can be written as
\begin{equation}\label{th:mode2:eq1}
(I_p\otimes A)(X^{-1}\otimes X^{-1})\textrm{vec}(S)+(I_p\otimes S)(X^{-1}\otimes X^{-1})\textrm{vec}(A)=2n\textrm{vec}(I_p)
\end{equation}
With $b,B,d,D$ as in the theorem and by taking again the vectorized operator in both sides of (\ref{th:mode2:eq1}) we have
$$
(b'\otimes B+d'\otimes D)\textrm{vec}(X^{-1}\otimes X^{-1})=2n\textrm{vec}(I_p)
$$
Now we can see that the $p^2\times p^4$ matrix $b'\otimes B+d'\otimes D$ is of full rank $p^2$ and so the solution of the above system is given by
$$
\textrm{vec}(X^{-1}\otimes X^{-1}) =  2n (b\otimes B+d\otimes D)\{ (b'\otimes B+d'\otimes D)(b\otimes B+d\otimes D)\}^{-1} \textrm{vec}(I_p) =c,
$$
as required.
\end{proof}

In order to find the mode $\widehat{X}$ using Theorem \ref{th:mode2} we follow the next steps: first we calculate $c$, then we extract the matrix $\widehat{X}^{-1/2}\otimes \widehat{X}^{-1/2}$ from $c$, then using the formula $\textrm{vec}(\widehat{X}^{-1})=\widehat{X}^{-1/2}\otimes \widehat{X}^{-1/2}\textrm{vec}(I_p)$ and rearranging again we find $\widehat{X}^{-1}$ and finally by inverting this matrix we obtain $\widehat{X}$.

However, the above method for the computation of the mode may not be efficient for high dimensional data. Even in low dimensions, as the time series problem we consider in the next section has a sequential application, if we want to use the above procedure for the determination of the mode or if we want to solve the matrix equation of Theorem \ref{th:mode2} using numerical methods, we will have to perform these operations at each time $t$. In our experience this is a heavy computational job, even for relatively short time series. In order to circumvent this difficulty we propose instead to use the estimator
\begin{equation}\label{eq:est:tilda}
\widetilde{X}=\frac{AS+SA}{2n},
\end{equation}
which is motivated by noting that for cases (1)-(3) above, we have $\widetilde{X}=\widehat{X}$. Even when $AS\neq SA$, we have $ASA^{-1}\approx S$, the approximation here refers to matrix similarity, meaning that the matrices $ASA^{-1}$ and $S$ have the same determinant, the same trace, the same eigenvalues, and the same spectrum (see Theorem 21.3.1 of Harville, 1997, p. 525). Thus $AS$ and $SA$ can be thought of being close to each other and estimator \eqref{eq:est:tilda} basically suggests considering the average of $AS/n$ and $SA/n$. Moreover, a close look at $AS$ and $SA$ shows that the diagonal elements of $AS$ and $SA$ are the same and that in \eqref{eq:est:tilda} the off-diagonal elements of $\widetilde{X}$ are averages of the off-diagonal elements of $AS/n$ and $SA/n$. When $AS\approx SA$, and for large $n$, the estimator $|\widetilde{X}|$ is close to $E|X|$, which from (b) of Theorem \ref{lem2a} is equal to $\prod_{i=1}^p(n-p-2-i)^{-1}|A||S|$.

It is easy to verify that if $X\sim GIW_p(n,A,S)$ and considering the partition
$$
X=\left[\begin{array}{cc} X_{11} & 0 \\ 0 & X_{22}\end{array}\right], \quad A=\left[\begin{array}{cc} A_{11} & 0 \\ 0 & A_{22}\end{array}\right] \quad \textrm{and} \quad S=\left[\begin{array}{cc} S_{11} & 0 \\ 0 & S_{22}\end{array}\right],
$$
where $X_{11}$, $A_{11}$, $S_{11}$ are $q\times q$ and $X_{22}$, $A_{22}$, $S_{22}$ are $(p-q)\times (p-q)$ covariance matrices, then $X_{11}\sim GIW_q(n,A_{11},S_{11})$ and $X_{22}\sim GIW_{p-q}(n,A_{22},S_{22})$, for $1\leq q\leq p-1$. The verification of this is just by noting that $|X|=|X_{11}||X_{22}|$ and $\textrm{tr}(AX^{-1/2}SX^{-1/2})=\textrm{tr}(A_{11}X_{11}^{-1/2}S_{11}X_{11}^{-1/2})+\textrm{tr}(A_{22}X_{22}^{-1/2}S_{22}X_{22}^{-1/2})$ so that $p(X)=p(X_{11})p(X_{22})$. From the latter it follows that $X_{11}$ and $X_{22}$ are independent. This result has the following interesting consequence. Suppose that $X_{ii}$ are independent $q_i\times q_i$ random covariance matrices, each following an inverse Wishart distribution $X_{ii}\sim IW_{q_i}(n,k_iS_i)$, for some $k_i>0$ and some covariance matrix $S_i$, with $i=1,\ldots,s$. Then the random matrix $X=\textrm{block diag}(X_{11},\ldots,X_{q_s})$ (the block diagonal matrix of $X_{11},\ldots,X_{ss}$) follows the GIW distribution, $X\sim GIW_p(n,A,S)$, where $A=\textrm{block diag}(k_1I_{q_1},\ldots,k_sI_{q_s})$, $S=\textrm{block diag}(S_{11},\ldots,S_{ss})$ and $p=q_1+\cdots+q_s$. In words, the GIW distribution with the above block diagonal structure on $A$ and $S$ is generated from the superposition of independent inverse Wishart matrices. This gives an interpretation of the matrices $A$ and $S$ in GIW as well as it gives a useful model building approach when we wish to consider the superposition of local level models as in West and Harrison (1997, Chapter 6).

It is also easy to verify that if $X\sim GIW_p(n,A,S)$, then the
density of $Y=X^{-1}$ is
$$
p(Y) = \frac{|A|^{(n-p-1)/2} |S|^{(n-p-1)/2}
|Y|^{(n-2p-2)/2}}{2^{p(n-p-1)/2}\Gamma_p\{(n-p-1)/2\} }
\exp(\textrm{tr}(-AY^{1/2}SY^{1/2}/2)).
$$
This distribution generalizes the Wishart distribution; we will say
that $Y$ follows the {\it generalized Wishart} distribution with
$n-p-1$ degrees of freedom, covariance matrices $A^{-1}$ and
$S^{-1}$, and we will write $Y\sim GW_p(n-p-1,A^{-1},S^{-1})$. We can observe that when $A=I_p$ or $S=I_p$, the above density reduces
to a Wishart density. Again our terminology and notation, should not
cause any confusion with other generalizations of the Wishart
distribution, proposed in the literature (Letac and Massam, 2004).

The next theorem is a generalization of the convolution of the
Wishart and multivariate singular beta distributions (Uhlig, 1994; D\'{i}az-Garc\'{i}a and Guti\'{e}rrez, 1997; Srivastava, 2003).
For some integers $m,n$, let the $p\times p$ random matrix $B$ follow the
multivariate singular beta distribution with $m$ and $n$ degrees of
freedom, respectively, writing $B\sim B_p(m/2,n/2)$. The singularity of the beta distribution considered here is due to $n$ being
smaller than $p$, meaning that $I_p-B$ is singular (a similar argument can be stated for the singularity of $B$, if $m<p$), and thus the density of $B$ does not exist under the Lebesgue measure in the space of the $p\times p$ real-valued covariance matrices, but it does exist
under the Steifel manifold. Under this consideration the density function of $B$ is obtained if we replace the determinant of $I_p-B$ (which is zero) by the product of the positive eigenvalues of $I_p-B$; for more details the reader is referred to the above references.

\begin{thm}\label{th:uhlig}
Let $p$ and $n$ be positive integers and let $m>p-1$. Let
$H\sim GW_p(m+n,A,S)$ and $B\sim B_p(m/2,n/2)$ be independent, where
$A$ and $S$ are known covariance matrices. Then
$$
G\equiv \mathcal{U}(H)'B\mathcal{U}(H) \sim GW_p(m,A,S),
$$
where $\mathcal{U}(H)$ denotes the upper triangular matrix of the
Choleski decomposition of $H$.
\end{thm}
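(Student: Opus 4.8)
The plan is to reproduce, in the generalized setting, the derivation of the classical convolution of a (possibly singular) Wishart matrix with a (possibly singular) multivariate beta matrix (Uhlig, 1994; D\'{\i}az-Garc\'{\i}a and Guti\'errez, 1997; Srivastava, 2003), exploiting the fact that the $GW$ density introduced above agrees with the ordinary Wishart density in every factor \emph{except} the exponential one, where $\exp(-\tr(A^{-1}H^{1/2}S^{-1}H^{1/2})/2)$ plays the role of $\exp(-\tr(A^{-1}H)/2)$. Concretely, I would start from the joint density of the independent pair $(H,B)$ --- the product of the $GW_p(m+n,A,S)$ density and the $B_p(m/2,n/2)$ density, the latter written on the Steifel manifold with the determinant of $I_p-B$ replaced by the product of its positive eigenvalues when $n<p$ --- and change variables to $(G,E)$ with $G=\mathcal{U}(H)'B\,\mathcal{U}(H)$ and $E=\mathcal{U}(H)'(I_p-B)\,\mathcal{U}(H)=H-G$, i.e.\ the Bartlett-type decomposition underlying the classical result.

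The Jacobian of this map is the Bartlett Jacobian; it involves neither $A$ nor $S$, depends on $H=G+E$ only, and is precisely the one tabulated, together with its singular-case modification for $n<p$, by D\'{\i}az-Garc\'{\i}a and Guti\'errez (1997) and Srivastava (2003), contributing a power $|H|^{-(p+1)/2}$. Substituting $B=(\mathcal{U}(H)')^{-1}G\,\mathcal{U}(H)^{-1}$ and $I_p-B=(\mathcal{U}(H)')^{-1}E\,\mathcal{U}(H)^{-1}$, so that $|B|=|G|/|H|$ and $|I_p-B|=|E|/|H|$, turns the two beta factors into $|G|^{(m-p-1)/2}$ and (the positive-eigenvalue analogue of) $|E|^{(n-p-1)/2}$, each divided by a power of $|H|$. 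A short bookkeeping then shows that \emph{all} powers of $|H|=|G+E|$ cancel: the exponents $(m+n-p-1)/2$ from the $GW$ density, $-(m-p-1)/2$ and $-(n-p-1)/2$ from the beta factors, and $-(p+1)/2$ from the Jacobian sum to zero. Hence the joint density of $(G,E)$ is proportional to $|G|^{(m-p-1)/2}\,|E|^{(n-p-1)/2}\exp(-\tr(A^{-1}(G+E)^{1/2}S^{-1}(G+E)^{1/2})/2)$, with $|E|$ read as the product of its positive eigenvalues when $E$ is singular.

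What remains --- and where I expect the real obstacle to be --- is to integrate $E$ out and recognise the marginal of $G$ as $GW_p(m,A,S)$. In the ordinary Wishart situation, i.e.\ when $A=I_p$ or $S=I_p$, the exponent is $\tr(A^{-1}(G+E))=\tr(A^{-1}G)+\tr(A^{-1}E)$, the $E$-integral factors off as a Wishart normalising constant, and the marginal of $G$ is immediately of the required form, so the statement is then literally the classical convolution result. For general $A$ and $S$, however, $\tr(A^{-1}(G+E)^{1/2}S^{-1}(G+E)^{1/2})$ is a genuinely nonlinear and non-additive function of $(G,E)$, because $(G+E)^{1/2}\neq G^{1/2}+E^{1/2}$, so the $E$-integral does not split and there is no evident reason for it to reproduce the factor $\exp(-\tr(A^{-1}G^{1/2}S^{-1}G^{1/2})/2)$. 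The natural device for getting around this is the correspondence implicit in Theorem~\ref{lem1}: the map $H\mapsto H^{1/2}S^{-1}H^{1/2}$ sends $GW_p(\nu,A,S)$ to the ordinary Wishart $W_p(\nu,A)$ with a Jacobian equal to the constant $|S|^{(p+1)/2}$, so one can attempt to push the whole computation into the ordinary-Wishart picture, invoke the classical convolution there, and pull it back. The delicate point, which I regard as the crux, is that this transfer is clean only if replacing the symmetric root $H^{1/2}$ appearing in that correspondence by the Choleski factor $\mathcal{U}(H)$ leaves the law of $G$ unchanged; this is automatic when $A=I_p$ or $S=I_p$, and in the general case it is exactly the interaction of $H\mapsto H^{1/2}S^{-1}H^{1/2}$ with Choleski conjugation that has to be controlled. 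I would therefore want to check whether an additional hypothesis --- such as $S=I_p$, or the commutativity $AS=SA$ --- is needed to complete the argument, or whether a more careful manifold-level computation closes the gap directly.
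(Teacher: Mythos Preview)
Your diagnosis is accurate: the change of variables $(H,B)\to(G,E)$ with $E=H-G$ leads to a non-separable $E$-integral because $\tr\bigl(A^{-1}(G+E)^{1/2}S^{-1}(G+E)^{1/2}\bigr)$ is not additive in $(G,E)$, and this is indeed the crux. But you have not closed this gap; your fallback idea of pushing through $H\mapsto H^{1/2}S^{-1}H^{1/2}$ and invoking the classical result is left as a conjecture, and as you note the interaction with the Choleski conjugation defining $G$ is unresolved. So the proposal, as it stands, is incomplete.

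The paper circumvents the obstacle by a different choice of complementary variable. Rather than working with the pair $(H,B)$, it follows Uhlig (1994) and realises $B$ from auxiliary independent variables $A_1\sim W_p(m,I_p)$ and $Y_1,\ldots,Y_n\sim N_p(0,I_p)$ via $C=A_1+\sum_j Y_jY_j'$ and $B=\{\U(C)'\}^{-1}A_1\{\U(C)\}^{-1}$. The change of variables is then performed in the enlarged space $(A_1,H,Y_1,\ldots,Y_n)\to(C,G,Z_1,\ldots,Z_n)$, and the residual is taken to be $D=H^{1/2}AH^{1/2}-G^{1/2}AG^{1/2}=\sum_j Z_jZ_j'$, \emph{not} $H-G$. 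With this choice the exponential term splits additively by construction, the Jacobian (which is the Uhlig Jacobian and carries no $A$ or $S$) combines with the determinant bookkeeping exactly as in your second paragraph, and the joint density of $(C,G,D)$ factors as $W_p(m+n,I_p)\times GW_p(m,A,S)\times\bigl(\text{singular Wishart in }S\bigr)$. The marginal of $G$ then drops out without any integration difficulty and without any commutativity hypothesis on $A$ and $S$. The missing idea in your proposal is precisely this: define the residual \emph{after} the square-root transformation rather than before.
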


In order to prove this theorem, we prove the somewhat more
general result in the following lemma.

\begin{lem}\label{lemma:uhlig}
Let $A_1\sim W_p(m,I_p)$, $A_2=\sum_{j=1}^nY_jY_j'$, with
$Y_j\sim N_p(0,I_p)$ and $H\sim GW_p(m+n,A,S)$, where $A_1$, $Y_j$
$(j=1,\ldots,n)$ and $H$ are independent. Define $C=A_1+A_2$,
$B=\{\U(C)'\}^{-1}A_1\{\U(C)\}^{-1}$, $G=\U(H)'B\U(H)$ and
$D=H^{1/2}AH^{1/2}-G^{1/2}AG^{1/2}$. Then $C,G,D$ are independent and $C\sim W_p(m+n,I_p)$, $G\sim GW_p(m,A,S)$,
$D=\sum_{j=1}^n Z_jZ_j'$, where $Z_j\sim N_p(0,S)$.
\end{lem}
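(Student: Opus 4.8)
The plan is to reduce Lemma \ref{lemma:uhlig} to the classical (non-generalized) case by the same square-root change of variables that defines the $GIW$/$GW$ family. Write $H = K^{1/2} H_0 K^{1/2}$, where I want $H_0$ to be an ordinary Wishart matrix; comparing the $GW_p(m+n,A,S)$ density of Theorem \ref{lem1}'s inverse-transform (the $GW$ density displayed just before Theorem \ref{th:uhlig}) with the Wishart density, the natural choice is to let $H_0 = S^{1/2} H^{-1} S^{1/2}$ be the ordinary Wishart matrix $H_0 \sim W_p(m+n, (S^{1/2}A^{-1}S^{1/2})^{-1})$, or equivalently to absorb the two parameter matrices via a linear map on a standard $W_p(m+n,I_p)$ variate. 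The key point is that $\U(H)$, $G = \U(H)'B\U(H)$, and $D = H^{1/2}AH^{1/2} - G^{1/2}AG^{1/2}$ must all be expressed through the same linear reparametrization so that the classical Uhlig-type decomposition (Uhlig, 1994; Srivastava, 2003) applies verbatim to the standardized matrices and then transforms back.

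First I would record the classical building block: if $A_1 \sim W_p(m,I_p)$, $A_2 = \sum_{j=1}^n Y_jY_j'$ with $Y_j \sim N_p(0,I_p)$ independent of $A_1$, $C = A_1+A_2$, and $B = \{\U(C)'\}^{-1} A_1 \{\U(C)\}^{-1}$, then $C \sim W_p(m+n,I_p)$, $C$ and $B$ are independent, and $B \sim B_p(m/2,n/2)$ — this is standard and is exactly the setup in which Uhlig's lemma lives. Second, for a standard Wishart $H^* \sim W_p(m+n,I_p)$ independent of $(C,B)$, the classical result gives that $G^* = \U(H^*)'B\U(H^*) \sim W_p(m,I_p)$, that $G^*$ is independent of $C$, and — this is the content one needs beyond mere marginals — that the "leftover" $D^* = H^* - G^*$ equals $\sum_{j=1}^n Z_j^* (Z_j^*)'$ for some $Z_j^* \sim N_p(0,I_p)$, with $(C, G^*, D^*)$ mutually independent. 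Third, I would introduce the linear map $T$ on matrices such that $H = T(H^*)$ has the $GW_p(m+n,A,S)$ distribution; concretely $T$ is built from $A^{1/2}$ and $S^{-1/2}$ in the way dictated by matching densities, so that $H^{1/2}AH^{1/2}$ becomes a linear (congruence) image of $H^*$. The Choleski factor transforms as $\U(H) = \U(H^*)\, M$ for an appropriate upper-triangular-compatible factor $M$ of the map, so $G = \U(H)'B\U(H) = M'\U(H^*)'B\U(H^*)M = M'G^*M$, which is precisely $T$ applied to $G^*$ at the level of the relevant congruence; hence $G \sim GW_p(m,A,S)$ by the same density computation as in Theorem \ref{lem1}. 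Likewise $D = H^{1/2}AH^{1/2} - G^{1/2}AG^{1/2}$ pulls back to $D^* = \sum Z_j^*(Z_j^*)'$ under the congruence by $A^{1/2}$-type factors, and the congruence sends $N_p(0,I_p)$ to $N_p(0,S)$, giving $D = \sum_{j=1}^n Z_jZ_j'$ with $Z_j \sim N_p(0,S)$. Independence of $(C,G,D)$ is inherited from independence of $(C,G^*,D^*)$ because $C$ is untouched and $(G,D)$ is a fixed deterministic function of $(G^*,D^*)$.

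The main obstacle is making the identity $\U(H)'B\U(H) \mapsto$ (congruence image of $\U(H^*)'B\U(H^*)$) precise: the Choleski factor of a congruence-transformed matrix is not simply the congruence applied to the Choleski factor unless the transforming matrix is itself upper triangular, so one must choose the parametrization $H = T(H^*)$ with $T$ of the form $H \mapsto R' H R$ (or the matching form on $H^{1/2}AH^{1/2}$) with $R$ upper triangular — the $QR$/Choleski structure of $A^{1/2}$ and $S^{1/2}$ has to be handled so that $\U(H) = \U(H^*) R$ holds, which is exactly where the triangular bookkeeping bites. Once the reparametrization is set up as a triangular congruence, everything else is the classical Uhlig decomposition plus the density identities already proved in Theorems \ref{lem1} and \ref{th:uhlig}'s preceding remarks, so the remaining work is routine. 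Finally, Theorem \ref{th:uhlig} follows immediately from Lemma \ref{lemma:uhlig} by marginalizing out $D$ (equivalently, by noting $B \sim B_p(m/2,n/2)$ is recovered from the $(A_1,A_2)$ construction and that $G = \U(H)'B\U(H)$ has the claimed $GW_p(m,A,S)$ law).
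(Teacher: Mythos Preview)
Your reduction strategy has a genuine gap at its very first step: there is no fixed linear (congruence) map $T$ that carries an ordinary Wishart variate $H^*$ to a $GW_p(m+n,A,S)$ variate $H$. The $GW$ family is defined through Theorem~\ref{lem1} by the relation that $H^{1/2}$ (the \emph{symmetric} square root of the random matrix itself) sandwiched with $S^{-1}$ or $A^{-1}$ is Wishart. That transformation $H \mapsto H^{1/2}S^{-1}H^{1/2}$ is a nonlinear bijection on the cone of positive definite matrices, not a congruence $H \mapsto R'HR$ for any fixed $R$. Consequently the two things you need simultaneously --- that $H = R'H^*R$ with $R$ upper triangular (so that $\U(H)=\U(H^*)R$) and that $H^{1/2}AH^{1/2}$ is a fixed congruence image of $H^*$ --- cannot both hold. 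If $H=R'H^*R$ then $H$ is ordinary Wishart, not $GW$; if instead you set $H^{1/2}AH^{1/2}$ equal to a congruence of $H^*$, then $H$ is a nonlinear function of $H^*$ and $\U(H)$ bears no simple relation to $\U(H^*)$. The ``triangular bookkeeping'' you flag as the main obstacle is therefore not a bookkeeping issue but a structural obstruction: the map you want does not exist except when $A$ or $S$ is a scalar multiple of $I_p$, in which case $GW$ collapses to Wishart and there is nothing to prove.

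The paper's proof avoids this entirely. It does not reduce to the classical case; instead it mimics Uhlig's density argument directly with the $GW$ density in place of the Wishart density. One writes down the joint density $p(A_1)p(H)p(A_2)$, computes the Jacobian of the change of variables $(A_1,H,Y_1,\ldots,Y_n)\to(C,G,Z_1,\ldots,Z_n)$ --- which is $|H|^{-n/2}|C|^{n/2}|A|^{-(p+1)/2}$ --- and then uses the algebraic identities $|A_1|=|C||B|$, $H^{1/2}AH^{1/2}=G^{1/2}AG^{1/2}+D$ and $|H|=|G|/|B|$ to see that the joint density factors as $p(C)p(G)p(D)$ with the stated marginals. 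The computation is short and is exactly parallel to Uhlig (1994), just with the $GW$ normalizing constants and exponent $\exp\{\mathrm{tr}(-A^{-1}H^{1/2}S^{-1}H^{1/2}/2)\}$ carried through.
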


\begin{proof}
The proof mimics the proof of Uhlig (1994) for the Wishart case. Define
$Z_j=\U(H^{1/2}AH^{1/2})'\{\U(C)'\}^{-1}Y_j$ and note that
$D=\sum_{j=1}^nZ_jZ_j'$. From Theorem \ref{lem1} and from Uhlig
(1994), the Jacobian $J(A_1,H,Y_1,\ldots,Y_n\rightarrow
C,G,Z_1,\ldots,Z_n)$ is $|H|^{-n/2}|C|^{n/2}|A|^{-(p+1)/2}$. Then,
the joint density function of $A_1,H,A_2$ can be written as
\begin{gather*}
p(A_1,H,A_2) = p(A_1) p(H) p(A_2) = \left\{ 2^{pm/2}
\Gamma_p(m/2)\right\}^{-1} \exp(\textrm{tr}(-A_1/2)) |A_1|^{(m-p-1)/2} \\
\times \left[ 2^{p(m+n)/2} \Gamma_p\{(m+n)/2\}|A|^{(m+n)/2} |S|^{(m+n)/2}
\right]^{-1}  \exp(\textrm{tr}(-A^{-1}H^{1/2}S^{-1}H^{1/2}/2))
|H|^{(m+n-p-1)/2} \\ \times (2\pi)^{-pn/2} \exp(\textrm{tr}(-A_2/2))
|A|^{-(p+1)}  \\ = \left[
2^{p(m+n)/2} \Gamma_p\{(m+n)/2\}\right]^{-1}
\exp(\textrm{tr}(-C/2))|C|^{(m+n-p-1)/2} \\ \times \left\{ 2^{pm/2}
\Gamma_p(m/2)|A|^{m/2}|S|^{m/2} \right\}^{-1}
\exp(\textrm{tr}(-A^{-1}G^{1/2}S^{-1}G^{1/2}/2)) |G|^{(m-p-1)/2} \\ \times
(2\pi)^{-pn/2} |S|^{-n/2} \exp(\textrm{tr}(-S^{-1}D/2)) |A|^{(n-p-1)/2} =
p(C)p(G)p(D),
\end{gather*}
where $A_1=|C||B|$, $H^{1/2}AH^{1/2}=G^{1/2}AG^{1/2}+D$ and
$|H|=|G|/|B|$ are used.
\end{proof}

\begin{proof}[Proof of Theorem \ref{th:uhlig}]
The proof is immediate from Lemma \ref{lemma:uhlig}, after noticing
that with the definition of the multivariate singular beta
distribution (Uhlig, 1994), $B\sim B_p(m/2,n/2)$.
\end{proof}

\section{Bayesian inference}\label{s2s2}

\subsection{Estimation forward in time}

In this section we consider estimation for model (\ref{model2}). The
prior distributions of $\theta_0|\Sigma$ and $\Sigma$ are chosen to
be Gaussian and generalized inverse Wishart respectively, i.e.
\begin{gather}
\theta_0|\Sigma\sim
N_p(m_0,\Sigma^{1/2}P_0\Sigma^{1/2})\quad\textrm{and}\quad
\Sigma\sim GIW_p(n_0,Q^{-1},S_0),\label{eq8}
\end{gather}
for some known parameters $m_0$, $P_0=p_0I_p$, $n_0>0$ and
$S_0$. $Q$ is the limit of $Q_t=P_{t-1}+W+I_p$, where $P_{t-1}$ is
a covariance matrix. The next result states that the limit of
$\{P_t\}$ (and hence the limit of $\{Q_t\}$) exist and it provides
the value of this limit as a function of $\phi$ and $W$.

\begin{thm}\label{lem:limit}
If $P_{t}=R_{t}(R_{t}+I_p)^{-1}$, with $R_{t}=\phi^2P_{t-1}+W$,
where $W$ is a positive definite matrix and considering the prior
$P_0=p_0I_p$, for a known constant $p_0>0$, it is
$$
P=\lim_{t\rightarrow\infty}P_t=\frac{1}{2\phi^2}\left[
\left\{(W+(1-\phi^2)I_p)^2+4W\right\}^{1/2}-W-(1-\phi^2)I_p\right],
$$
for $\phi\neq 0$ and $P=W (W+I_p)^{-1}$, for $\phi=0$.
\end{thm}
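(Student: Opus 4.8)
The plan is to reduce the matrix recursion to a family of decoupled scalar recursions by exploiting the fact that every $P_t$ commutes with $W$, and then to analyse the resulting scalar Riccati-type recursion by elementary monotonicity arguments.

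First I would establish the commutativity. Since $P_0=p_0I_p$ is a scalar multiple of the identity it commutes with $W$; and if $P_{t-1}$ commutes with $W$ then so does $R_t=\phi^2P_{t-1}+W$, and hence so does $P_t=R_t(R_t+I_p)^{-1}$, being a rational function of $R_t$ alone. By induction all of $P_t$, $R_t$ and $W$ share a common orthonormal eigenbasis. Fixing such a basis, with $w_1,\dots,w_p>0$ the eigenvalues of $W$, the recursion splits into $p$ independent scalar recursions
$$p_t^{(i)}=\frac{\phi^2 p_{t-1}^{(i)}+w_i}{\phi^2 p_{t-1}^{(i)}+w_i+1},\qquad p_0^{(i)}=p_0,\qquad i=1,\dots,p,$$
and it suffices to show each converges and then to reassemble.

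Next I would analyse the scalar map $g(x)=(\phi^2x+w)/(\phi^2x+w+1)=1-(\phi^2x+w+1)^{-1}$ for a fixed $w>0$. For $\phi=0$ the map is the constant $w/(w+1)$, so $p_t^{(i)}=w_i/(w_i+1)$ for every $t\ge 1$ and the limit is $W(W+I_p)^{-1}$; this case is immediate. For $\phi\ne 0$, $g$ is continuous and strictly increasing on $[0,\infty)$ with range contained in the compact interval $J=[w/(w+1),1]$, which $g$ maps into itself; since $p_1^{(i)}\in J$ and $g$ is monotone, the sequence $(p_t^{(i)})_{t\ge 1}$ is monotone and bounded, hence convergent, and its limit is a nonnegative fixed point of $g$. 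The fixed-point equation $x=g(x)$ is a quadratic in $x$ whose constant term is $-w/\phi^2<0$ up to a positive factor, so it has exactly one nonnegative root; the limit therefore equals that root. Writing it out and applying the spectral theorem to $W$, i.e.\ assembling the per-eigenvalue limits in the common eigenbasis, gives the closed form claimed in the statement, the matrix square root there being the genuine symmetric positive-definite square root of a positive-definite matrix that commutes with $W$.

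The only genuinely delicate point is the reduction in the first step: without the commutativity observation one is facing a true matrix Riccati recursion, and $\{P_t\}$ need not be monotone in the Loewner order, since different eigenvalues of $W$ can drive the corresponding eigenvalues of $P_t$ upward and downward simultaneously, so a direct operator-monotonicity argument does not close. Once the problem is diagonalised everything is elementary. As a by-product one may also note that $g'$ at the fixed point $\bar p$ equals $\phi^2(\phi^2\bar p+w+1)^{-2}<1$, so the convergence is geometric; this is the quantitative fact underlying the computational-savings claim made right after the theorem and the truncation of the recursion for $Q_t$ in the estimation algorithm.
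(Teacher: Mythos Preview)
Your argument is correct and follows a genuinely different route from the paper's. The paper proceeds in the reverse order: it first proves convergence of $\{P_t\}$ at the matrix level (Lemma~\ref{lem3}), arguing that the sequence is bounded between $0$ and $I_p$ and monotone in the Loewner order; it then shows that the limit commutes with $W$ (Lemma~\ref{lem3post}); and only at the final step does it exploit commutativity, to complete the square in the matrix quadratic $\phi^2P^2+P(W+(1-\phi^2)I_p)-W=0$. You instead establish commutativity at the outset and use it far more aggressively, diagonalising in a fixed eigenbasis of $W$ to reduce the problem to $p$ independent scalar Riccati recursions, after which convergence and identification of the fixed point are one-variable calculus. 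Your remark that $\{P_t\}$ need not be Loewner-monotone is apt: depending on $p_0$ and the spread of the eigenvalues of $W$, some scalar components can increase while others decrease, so the base case $P_1^{-1}\gtrless P_0^{-1}$ of the paper's matrix-level inductive monotonicity argument is not automatic and tacitly leans on the same commutativity structure to be made rigorous. What the paper's organisation buys is standalone intermediate lemmas on boundedness and on commutativity of the limit; what your approach buys is a shorter, more elementary proof together with the explicit local contraction rate $g'(\bar p)=\phi^2(\phi^2\bar p+w+1)^{-2}<1$, which the paper only obtains by reference to Harvey (1989) and Chan \emph{et al.}\ (1984).
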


Before we prove this result we give some background on the limit of
covariance matrices. Let $A\geq 0$ denote that the matrix $A$ is
non-negative definite, let $A>0$ denote that the matrix $A$ is
positive definite and let $A>B$ denote that the matrices $A>0$ and
$B>0$ satisfy $A-B>0$. If $A>B$, then $A^{-1}<B^{-1}$ (Horn and
Johnson, 1999). The sequence of symmetric matrices $\{P_t\}$ is said
to be monotonic and bounded if the scalar sequence
$\{\ell'P_t\ell\}$ is monotonic and bounded, for all real-valued vectors $\ell$. If for all $t$ the
matrix $P_t$ is a non-negative definite matrix, then the above
definition implies that $\{P_t\}$ is bounded if there exist matrices
$L$ and $U$ satisfying $L\leq P_t\leq U$ and monotonic if
$P_{t+1}<P_t$ or $P_{t+1}>P_t$, for any $t>t_0$ and $t_0>0$. If $\{P_t\}$ is both
monotonic and bounded, then it is convergent, since the sequence
$\{\ell'P_t\ell\}$ is also monotonic and bounded and so it is
convergent. The following two lemmas are needed in order to prove
the limit of Theorem \ref{lem:limit}.

\begin{lem}\label{lem3}
If $P_{t}=R_{t}(R_{t}+I_p)^{-1}$, with $R_{t}=\phi^2 P_{t-1}+W$,
where $W$ is a positive definite matrix and $\phi$ is a real number,
then the sequence of $p\times p$ positive matrices $\{P_t\}$ is
convergent.
\end{lem}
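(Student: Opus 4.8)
The plan is to use the special form of the prior, $P_0=p_0I_p$, to collapse the matrix recursion onto $p$ decoupled scalar ones. First I would show by induction on $t$ that every $P_t$ commutes with $W$: this is clear for $t=0$ since $P_0$ is a multiple of $I_p$, and if $P_{t-1}W=WP_{t-1}$ then $R_t=\phi^2P_{t-1}+W$ commutes with $W$, hence so does $(R_t+I_p)^{-1}$ and therefore so does $P_t=R_t(R_t+I_p)^{-1}=I_p-(R_t+I_p)^{-1}$. Consequently $W$ and all the $P_t$ share a common orthogonal eigenbasis: there is an orthogonal $U$ with $U'WU=\mathrm{diag}(w_1,\dots,w_p)$, $w_i>0$, and $U'P_tU=\mathrm{diag}(p_t^{(1)},\dots,p_t^{(p)})$ for every $t$, and in this basis the recursion splits into the scalar maps
\begin{equation*}
p_t^{(i)}=\frac{\phi^2 p_{t-1}^{(i)}+w_i}{\phi^2 p_{t-1}^{(i)}+w_i+1},\qquad p_0^{(i)}=p_0,\qquad i=1,\dots,p.
\end{equation*}
So it suffices to prove that each scalar sequence $\{p_t^{(i)}\}_{t\ge0}$ converges.

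Fix $i$, write $p_t=p_t^{(i)}$ and $w=w_i>0$, so that $p_t=h(p_{t-1})$ with $h(x)=1-(\phi^2x+w+1)^{-1}$. If $\phi=0$ then $p_t=w/(w+1)$ for every $t\ge1$, so convergence is immediate. If $\phi\ne0$ then $h$ is continuous and strictly increasing on $[0,\infty)$ and maps it into the compact interval $[\,w/(w+1),1\,]$; hence $p_t\in[\,w/(w+1),1\,]$ for all $t\ge1$. Since $h$ is monotone, the scalar sequence is monotone from $t=1$ onward — nondecreasing if $p_1\ge p_0$ and nonincreasing if $p_1\le p_0$, the two cases being exhaustive because $p_0$ and $p_1$ are comparable reals. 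A bounded monotone sequence converges, so $p_t^{(i)}\to p^{(i)}$ for each $i$, and therefore $P_t=U\,\mathrm{diag}(p_t^{(1)},\dots,p_t^{(p)})\,U'\to U\,\mathrm{diag}(p^{(1)},\dots,p^{(p)})\,U'$, which proves the claim.

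The only delicate point is the reduction step: it relies on the hypothesis $P_0=p_0I_p$, which forces simultaneous diagonalizability and thereby replaces the partially ordered matrix setting — in which $\{P_t\}$ need not be monotone, as $P_1-P_0$ is generically indefinite — by a totally ordered scalar one. If one wished to avoid that hypothesis, the alternative is a matrix-level argument: $f(P)=I_p-(\phi^2P+W+I_p)^{-1}$ is monotone non-decreasing on the cone of non-negative definite matrices (inversion reverses the ordering; Horn and Johnson, 1999) and satisfies $W(W+I_p)^{-1}\le f(P)<I_p$ for every $P\ge0$, so the iterates $f^k(0)$ increase and $f^k(I_p)$ decrease, each to a fixed point of $f$, and together they squeeze $\{P_t\}$; one then shows the fixed point is unique — any fixed point $P$ with $0<P<I_p$ obeys $(I_p-P)(\phi^2P+W+I_p)=I_p=(\phi^2P+W+I_p)(I_p-P)$, whence $PW=WP$, reducing once more to a scalar quadratic with a single root in $(0,1)$ — so the two bounds coincide and $\{P_t\}$ converges. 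I would present the first, shorter route, since the lemma is stated only for $P_0=p_0I_p$.
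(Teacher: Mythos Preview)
Your argument is correct, but it is not the paper's. The paper argues entirely at the matrix level: it shows $0<P_t<I_p$ directly, and then claims Loewner monotonicity by proving the inductive step ``$P_{t-1}^{-1}>P_{t-2}^{-1}\Rightarrow P_t^{-1}>P_{t-1}^{-1}$'' (and the reverse), without ever diagonalising or invoking the prior $P_0=p_0I_p$. Your route instead uses that prior to force commutativity with $W$, passes to a common eigenbasis, and proves convergence of $p$ decoupled scalar recursions via the elementary ``monotone increasing map $\Rightarrow$ monotone orbit'' argument.

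Two remarks on the comparison. First, a small misreading: Lemma~\ref{lem3} as stated does \emph{not} assume $P_0=p_0I_p$; that hypothesis appears only in Theorem~\ref{lem:limit} and Lemma~\ref{lem3post}. The paper's own proof of Lemma~\ref{lem3} is written for a generic $P_0$, so your primary argument proves a weaker statement than the one on the page. Second --- and this cuts the other way --- your parenthetical observation that $P_1-P_0$ is ``generically indefinite'' is exactly right and in fact exposes a gap in the paper's monotonicity argument: the inductive step is fine, but no base case is supplied, and even with $P_0=p_0I_p$ the difference $P_1-P_0$ can be indefinite when the eigenvalues of $W$ straddle the relevant threshold. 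Your diagonalisation sidesteps this neatly, because each scalar sequence is monotone in its own direction and convergence in each coordinate is enough. Your sketched squeeze argument (iterating $f$ from $0$ and from $I_p$ and showing the unique fixed point) is the clean way to handle the general $P_0$ and would actually repair the paper's lemma as stated; if you want to match the lemma's hypotheses rather than the theorem's, that is the version to write out.
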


\begin{proof}
First suppose that $\phi=0$. Then $R_t=W$, for all $t$, and so
$P_t=W (W+I_p)^{-1}$, which of course is convergent.

Suppose now that $\phi\neq 0$. It suffices to prove that $\{P_t\}$
is bounded and monotonic. Clearly, $0\leq P_{t}$ and since
$\phi^2>0$ and $W$ is positive definite $0<P_{t}$, for all $t>0$.
Since $(R_{t}+I_p)^{-1}>0$,
$(R_{t}+I_p-R_{t})(R_{t}+I_p)^{-1}>0\Rightarrow
P_{t}=R_{t}(R_{t}+I_p)^{-1}<I_p$ and so $0<P_{t}<I_p$. For the
monotonicity it suffices to prove that, if $P_{t-1}^{-1}>
P_{t-2}^{-1}$ (equivalent $P_{t-1}^{-1}< P_{t-2}^{-1}$), then
$P_{t}^{-1}> P_{t-1}^{-1}$ (equivalent $P_{t}^{-1}< P_{t-1}^{-1}$).
From $P_{t-1}^{-1}> P_{t-2}^{-1}$ we have $P_{t-1}<
P_{t-2}\Rightarrow R_{t}< R_{t-1}\Rightarrow R_{t}^{-1}>
R_{t-1}^{-1}\Rightarrow
P_{t}^{-1}-P_{t-1}^{-1}=R_{t}^{-1}-R_{t-1}^{-1}> 0$, since
$P_{t}^{-1}=(R_{t}+I_p)R_{t}^{-1}=I_p+R_{t}^{-1}$. With an analogous
argument we have that, if $P_{t-1}^{-1}< P_{t-2}^{-1}$, then
$P_{t}^{-1}-P_{t-1}^{-1}< 0$, from which the monotonicity follows.
\end{proof}

\begin{lem}\label{lem3post}
Let $\{P_t\}$ be the sequence of Lemma \ref{lem3} and suppose that
$P_0=p_0I_p$, for a known constant $p_0>0$. Then, with $W$ as in
Lemma \ref{lem3}, the limiting matrix
$P=\lim_{t\rightarrow\infty}P_t$ commutes with $W$.
\end{lem}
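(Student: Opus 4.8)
The plan is to show that the property ``commutes with $W$'' is passed down the recursion, so that every $P_t$ commutes with $W$, and then to conclude by observing that this property survives taking limits.

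First I would prove by induction on $t$ that $P_tW=WP_t$. The base case holds because $P_0=p_0I_p$ is a scalar multiple of the identity and so commutes with every matrix. For the inductive step, assume $P_{t-1}W=WP_{t-1}$. Then $R_t=\phi^2P_{t-1}+W$ commutes with $W$, being a linear combination of $P_{t-1}$ and $W$, each of which commutes with $W$; hence $(R_t+I_p)W=W(R_t+I_p)$, and multiplying this identity on the left and on the right by $(R_t+I_p)^{-1}$ gives $(R_t+I_p)^{-1}W=W(R_t+I_p)^{-1}$. Along the way one needs $R_t+I_p$ to be invertible, which is immediate since $R_t=\phi^2P_{t-1}+W$ is positive definite ($W>0$ and $P_{t-1}\geq 0$), so $R_t+I_p>I_p>0$. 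Finally $P_t=R_t(R_t+I_p)^{-1}$ is a product of two matrices that commute with $W$, and such a product commutes with $W$ (if $AW=WA$ and $BW=WB$ then $ABW=AWB=WAB$); this closes the induction.

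To finish, Lemma \ref{lem3} guarantees that $P=\lim_{t\rightarrow\infty}P_t$ exists. The set of $p\times p$ real matrices commuting with $W$ is the kernel of the continuous linear map $X\mapsto XW-WX$, hence a closed set; since each $P_t$ lies in this set and $P_t\rightarrow P$, the limit $P$ lies in it too, that is, $PW=WP$.

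I do not expect any genuine obstacle here: the substance of the lemma is just that the one-step map preserves commutation with $W$ --- which is transparent because every ingredient of that map is built from $W$, $I_p$, $P_{t-1}$ and scalars --- together with the elementary fact that commutation with a fixed matrix is a closed condition. One small point worth flagging is that we cannot shortcut the argument by appealing to the closed-form expression for $P$, since Theorem \ref{lem:limit} is itself derived using the present lemma; the induction plus closedness argument above is self-contained.
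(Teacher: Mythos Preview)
Your proposal is correct and follows essentially the same approach as the paper: an induction showing each $P_t$ commutes with $W$ (the paper carries this out via $P_t^{-1}=I_p+(\phi^2P_{t-1}+W)^{-1}$ rather than via $R_t$ and $(R_t+I_p)^{-1}$, but that is a cosmetic difference), followed by passing the commutation relation to the limit using continuity. Your remark that one cannot shortcut via the closed form for $P$ is apt, since Theorem~\ref{lem:limit} relies on this lemma.
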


\begin{proof}
First we prove that if $P_{t-1}$ commutes with $W$, then $P_{t}$
also commutes with $W$. Indeed from $P_{t}=(\phi^2 P_{t-1}+W)(\phi^2
P_{t-1}+W+I_p)^{-1}$ we have that $P_{t}^{-1}=I_p+(\phi^2
P_{t-1}+W)^{-1}$ and then
$$
P_{t}^{-1}W^{-1}=W^{-1}+(\phi^2W P_{t-1}+W^2)^{-1}=W^{-1}+(\phi^2
P_{t-1}W+W^2)^{-1}=W^{-1}P_{t}^{-1}
$$
which implies that $W
P_{t}=(P_{t}^{-1}W^{-1})^{-1}=(W^{-1}P_{t}^{-1})^{-1}=P_{t}W$ and so
$P_{t}$ and $W$ commute. Because $P_0=p_0I_p$, $P_0$ commutes with
$W$ and so by induction it follows that the sequence of matrices
$\{P_t, t\geq 0\}$ commutes with $W$. Since
$P=\lim_{t\rightarrow\infty}P_t$ exists (Lemma \ref{lem3}) we have
$$
PW=\lim_{t\rightarrow\infty}(P_tW)=\lim_{t\rightarrow\infty}(W
P_t)=W P
$$
and so $P$ commutes with $W$.
\end{proof}

\begin{proof}[Proof of Theorem \ref{lem:limit}]
If $\phi=0$, then from Lemma \ref{lem3} we have $P_t=P=W(W+I_p)^{-1}$.

Let $\phi\neq 0$; from Lemma \ref{lem3} we have that $P$ exists and from Lemma
\ref{lem3post} we have that $P$ and $W$ commute. From
$P_{t}=(\phi^2P_{t-1}+W)(P_{t-1}+W+I_p)^{-1}$ we have
$P=(\phi^2P+W)(\phi^2P+W+I_p)^{-1}$ from which we get the equation
$P^2+\phi^{-2}P(W+I_p-\phi^2I_p)-\phi^{-2}W=0$. Now since $P$ and
$W$ commute we can write
\begin{gather*}
P^2+\phi^{-2}P(W+I_p-\phi^2I_p)-\phi^{-2}W=0 \Rightarrow
P^2+\frac{1}{2\phi^2}P(W+(1-\phi^2)I_p) \\
+\frac{1}{2\phi^2}(W+(1-\phi^2)I_p)
P+\frac{1}{4\phi^4}(W+(1-\phi^2)I_p)^2-
\frac{1}{4\phi^4}(W+(1-\phi^2)I_p)^2-W=0\\
\Rightarrow \left(
P+\frac{1}{2\phi^2}(W+(1-\phi^2)I_p)\right)^2=\frac{1}{4\phi^4}(W+(1-\phi^2)I_p)^2+W
\\ \Rightarrow P=\frac{1}{2\phi^2}\left[
\left\{(W+(1-\phi^2)I_p)^2+4W\right\}^{1/2}-W-(1-\phi^2)I_p\right],
\end{gather*}
after rejecting the negative definite root.
\end{proof}

Theorem \ref{lem:limit} generalizes relevant convergence results
for the univariate random walk plus noise model (Anderson and
Moore, 1979, p. 77; Harvey, 1989, p. 119). Following a similar
argument as in Harvey (1989, p. 119), we can see that the speed of
convergence is exponential; for a related discussion on the rate
of convergence the reader is referred to Chan {\it et al.} (1984).

Let $Y\sim t_p(n,m,P)$ denote that the $p$-dimensional random
vector $Y$ follows a multivariate Student $t$ distribution with $n$
degrees of freedom, mean $m$ and scale or spread matrix $P$ (Gupta
and Nagar, 1999, Chapter 4). Let $y^t=(y_1,\ldots,y_t)$ be the information set at time $t$, comprising
data up to time $t$, for $t=1,2,\ldots$ The next result gives an approximate
Bayesian algorithm for the posterior distributions of $\theta_t$
and $\Sigma$ as well as for the one-step forecast distribution of
$y_t$.

\begin{thm}\label{th3}
In the local level model (\ref{model2}), let the initial priors for
$\theta_0|\Sigma$ and $\Sigma$ be specified as in equation
(\ref{eq8}). The one-step
forecast and posterior distributions are approximately given, for
each $1\leq t\leq N$, as follows:
\begin{enumerate}
\item [(a)] One-step forecast at time $t$: $y_t|y^{t-1}\sim
t_p(n_{t-1},\phi m_{t-1},S_{t-1})$, where $S_{t-1}$,
$m_{t-1}$ are known at time $t-1$. \item [(b)] Posteriors at time $t$:
$\theta_{t}|\Sigma,y^{t}\sim
N_p(m_{t},\Sigma^{1/2}P_{t}\Sigma^{1/2})$ and
$\Sigma_{t}|y^{t}\sim GIW(n_t+2p,Q^{-1},S_{t})$, with
$m_{t}=\phi m_{t-1}+A_{t}e_{t}$, $P_{t}=(\phi^2P_{t-1}+W)(\phi^2P_{t-1}+W+I_p)^{-1}$,
$e_{t}=y_{t}-\phi m_{t-1}$, $S_{t}=S_{t-1}+e_{t}e_{t}'$,
where $A_{t}=\Sigma^{1/2}P_{t}\Sigma^{-1/2}$ is approximated
by $A_{t}=\widetilde{\Sigma}^{1/2} P \widetilde{\Sigma}^{-1/2}$, with
$\widetilde{\Sigma}$ the estimator of
$\Sigma_{t}|y^{t}$ as in (\ref{eq:est:tilda}), and
$Q_t=P_{t-1}+W+I_p$ being approximated by its limit
$Q=P+W+I_p$, where $P$ is given by Theorem \ref{lem:limit}.
\end{enumerate}
\end{thm}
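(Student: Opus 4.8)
The plan is to run a forward filter by induction on $t$, alternating exact conjugate Gaussian updating of $\theta_t$ given $\Sigma$ (the Kalman recursion) with $GIW$ updating of $\Sigma$ after $\theta_t$ has been integrated out, and inserting two approximations to keep everything in closed form: replace $R_t+I_p$ and $Q_t=P_{t-1}+W+I_p$ by their common limit $Q=P+W+I_p$ from Theorem \ref{lem:limit}, and replace the $\Sigma$-dependent Kalman gain by a plug-in built from $\widetilde\Sigma$ so that $m_t$ ceases to depend on $\Sigma$. The inductive hypothesis at time $t-1$ is $\theta_{t-1}\mid\Sigma,y^{t-1}\sim N_p(m_{t-1},\Sigma^{1/2}P_{t-1}\Sigma^{1/2})$ and $\Sigma\mid y^{t-1}\sim GIW_p(n_{t-1}+2p,Q^{-1},S_{t-1})$, the base case being the prior \eqref{eq8}, with the index $n_t$ normalized so that $n_{t-1}+2p$ is the $GIW$ degrees of freedom carried forward from time $t-1$ (equal to $n_0$ when $t=1$) and $n_t=n_{t-1}+1$.

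For the evolution/forecast step I would first push $\theta_{t-1}$ through the state equation: $\omega_t\sim N_p(0,\Sigma^{1/2}W\Sigma^{1/2})$ is independent of $\theta_{t-1}$, so $\theta_t\mid\Sigma,y^{t-1}\sim N_p(\phi m_{t-1},\Sigma^{1/2}R_t\Sigma^{1/2})$ with $R_t=\phi^2P_{t-1}+W$, and then add the independent $\epsilon_t\sim N_p(0,\Sigma)$ to obtain $y_t\mid\Sigma,y^{t-1}\sim N_p(\phi m_{t-1},\Sigma^{1/2}(R_t+I_p)\Sigma^{1/2})$. After replacing $R_t+I_p$ by $Q$, I invoke Theorem \ref{th:commute}: the hypothesis $\Sigma\mid y^{t-1}\sim GIW_p(n_{t-1}+2p,Q^{-1},S_{t-1})$ equals $GIW_p(n_{t-1}+2p,S_{t-1},Q^{-1})$, which says exactly that $V:=\Sigma^{1/2}Q\Sigma^{1/2}\sim IW_p(n_{t-1}+2p,S_{t-1})$. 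Hence $y_t$ is conditionally Gaussian with inverse-Wishart covariance $V$, and the familiar Gaussian/inverse-Wishart integral — using the inverse-Wishart normalizing-constant identity and $|S_{t-1}+e_te_t'|=|S_{t-1}|(1+e_t'S_{t-1}^{-1}e_t)$ — yields a multivariate Student $t$ with $n_{t-1}$ degrees of freedom, location $\phi m_{t-1}$ and scale matrix proportional to $S_{t-1}$; in the spread-matrix convention of the paper this is $t_p(n_{t-1},\phi m_{t-1},S_{t-1})$, establishing (a). It is in this passage from the $GIW$ index $n_{t-1}+2p$ to the $t$ index $n_{t-1}$ that the offset $2p$ is consumed.

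For the update at time $t$ I would again condition on $\Sigma$. Combining the Gaussian prior $\theta_t\mid\Sigma,y^{t-1}\sim N_p(\phi m_{t-1},\Sigma^{1/2}R_t\Sigma^{1/2})$ with the likelihood $y_t\mid\theta_t,\Sigma\sim N_p(\theta_t,\Sigma)$ and completing the square gives the exact posterior $\theta_t\mid\Sigma,y^t\sim N_p(m_t,\Sigma^{1/2}P_t\Sigma^{1/2})$: the posterior precision is $\Sigma^{-1/2}(R_t^{-1}+I_p)\Sigma^{-1/2}$, so $P_t=R_t(R_t+I_p)^{-1}=(\phi^2P_{t-1}+W)(\phi^2P_{t-1}+W+I_p)^{-1}$, the recursion of Lemma \ref{lem3}; and the posterior mean, after using $P_tR_t^{-1}=I_p-P_t$, collapses to $m_t=\phi m_{t-1}+A_te_t$ with $A_t=\Sigma^{1/2}P_t\Sigma^{-1/2}$ and $e_t=y_t-\phi m_{t-1}$. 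Since this $m_t$ depends on $\Sigma$ only through $A_t$, replacing (inside the gain only) $P_t$ by $P$ and $\Sigma^{\pm1/2}$ by $\widetilde\Sigma^{\pm1/2}$ turns $A_t$ into a known matrix and yields the stated $\Sigma$-free $m_t$, closing the $\theta$-part of the induction. For $\Sigma$ I would write $p(\Sigma\mid y^t)\propto p(\Sigma\mid y^{t-1})\,p(y_t\mid\Sigma,y^{t-1})$; the second factor, as obtained above before integrating $V$ out, equals up to constants $|\Sigma|^{-1/2}\exp\{-\tfrac12\,\textrm{tr}(Q^{-1}\Sigma^{-1/2}e_te_t'\Sigma^{-1/2})\}$, so multiplying by the $GIW$ density of Theorem \ref{lem1} reproduces the $GIW$ kernel with its first parameter raised by one and $S_{t-1}$ replaced by $S_{t-1}+e_te_t'=S_t$; therefore $\Sigma\mid y^t\sim GIW_p(n_t+2p,Q^{-1},S_t)$ with $n_t=n_{t-1}+1$, and the induction is complete.

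The Gaussian algebra above is the routine Kalman recursion; the real content is in the two approximations, which make the conclusion only approximately true. Their errors are governed by $\|R_t+I_p-Q\|$, $\|Q_t-Q\|$ and by the discrepancy between $A_t$ and its plug-in; by Theorem \ref{lem:limit} the first two decay geometrically in $t$, so the scheme is asymptotically exact. The one genuinely delicate exact step is the forecast marginalization: it hinges on the symmetry $GIW_p(n,A,S)=GIW_p(n,S,A)$ to recast $\Sigma^{1/2}Q\Sigma^{1/2}$ as an inverse-Wishart matrix with scale $S_{t-1}$, and then on matching the degrees-of-freedom and scale constants of the resulting multivariate $t$ to the paper's parametrization — which is precisely where the fixed shift $2p$ in the $GIW$ index earns its keep. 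I would therefore expect the bookkeeping of the degrees of freedom, rather than any single computation, to be the main thing to get right.
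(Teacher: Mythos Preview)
Your argument is correct and follows essentially the same inductive scheme as the paper: Kalman evolution and update for $\theta_t$ conditional on $\Sigma$, the two approximations ($Q_t\to Q$ and the plug-in gain), and the Bayes update that sends the $GIW$ kernel $(n_{t-1}+2p,Q^{-1},S_{t-1})$ to $(n_t+2p,Q^{-1},S_t)$. The one presentational difference is your derivation of the forecast marginal: you invoke Theorem~\ref{th:commute} to rewrite the prior as $GIW_p(n_{t-1}+2p,S_{t-1},Q^{-1})$, so that $V=\Sigma^{1/2}Q\Sigma^{1/2}\sim IW_p(n_{t-1}+2p,S_{t-1})$ and the marginalization becomes the textbook Gaussian/inverse-Wishart convolution; the paper instead writes the joint density $p(y_t,\Sigma\mid y^{t-1})$ directly and integrates over $\Sigma$ by recognizing the $GIW$ normalizing constant from Theorem~\ref{lem1}. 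The two computations are equivalent (indeed Theorem~\ref{lem1} is proved via exactly your change of variables), but your route makes more transparent why the Student-$t$ scale is $S_{t-1}$ and why the $2p$ offset is consumed in the passage to the $t$ degrees of freedom.
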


\begin{proof}
The proof is inductive in the distribution of $\Sigma|y^t$. Assume
that given $y^{t-1}$ the distribution of $\Sigma$ is
$\Sigma|y^{t-1}\sim GIW(n_{t-1}+2p,Q^{-1},S_{t-1})$.

From the Kalman filter, conditionally on $\Sigma$, the one-step
forecast density of $y_t$ is
$$
y_t|\Sigma,y^{t-1}\sim
N_p(\phi m_{t-1},\Sigma^{1/2}Q_t\Sigma^{1/2})\approx
N_p(\phi m_{t-1},\Sigma^{1/2}Q\Sigma^{1/2}),
$$
where $m_{t-1}$, $Q_t$ and $Q$ are as in the theorem.

Given $y^{t-1}$ the joint distribution of $y_t$ and $\Sigma$ is
\begin{eqnarray}
p(y_t,\Sigma|y^{t-1})&=&p(y_{t}|\Sigma,y^{t-1})p(\Sigma|y^{t-1})\nonumber\\
&=&c_1\frac{\exp(\textrm{tr}(-Q^{-1}\Sigma^{-1/2}
(e_te_t'+S_{t-1})\Sigma^{-1/2}/2))}{|\Sigma
|^{(n_{t-1}+1+2p)/2}},\label{eq9}
\end{eqnarray}
where
$$
c_1=\frac{|S_{t-1}|^{(n_{t-1}+p-1)/2}}{(2\pi)^{\pi/2}2^{p(n_{t-1}+p-1)}|Q|^{(n_{t-1}+p)/2}
\Gamma_p\{(n_{t-1}+p-1)/2\}}.
$$
The one-step forecast density of $y_t$ is
\begin{eqnarray*}
p(y_t|y^{t-1})&=&\int_{\Sigma >0} p(y_t,\Sigma |y^{t-1})\,d\Sigma \\
&=& c_1 \int_{\Sigma >0} |\Sigma |^{-(n_{t-1}+2p+1)/2}
\exp(\textrm{tr}(-Q^{-1}\Sigma^{-1/2}(e_te_t'+S_{t-1}
)\Sigma^{-1/2}/2)) \,d\Sigma \\ &=& c_1 \frac{2^{p(n_{t-1}+p)/2}
\Gamma_p\{(n_{t-1}+p)/2\} }{ |Q|^{-(n_{t-1}+p)/2}
|e_te_t'+S_{t-1}|^{(n_{t-1}+p)/2} } \\ &=&
\frac{\Gamma_p\{(n_{t-1}+p)/2\} } { \pi^{p/2}
\Gamma_p\{(n_{t-1}+p-1)/2\} } |S_{t-1}|^{(n_{t-1}+p-1)/2}
|e_te_t'+S_{t-1}|^{-(n_{t-1}+p)/2},
\end{eqnarray*}
and so $y_{t}|y^{t-1}\sim t_p (n_{t-1}, \phi m_{t-1}, S_{t-1})$, as
required.

Now we derive the distribution of $\Sigma |y^{t}$. Applying Bayes
theorem we have
$$
p(\Sigma |y^t)=\frac{p(y_t|\Sigma ,y^{t-1})p(\Sigma
|y^{t-1})}{p(y_{t}|y^{t-1})}
$$
and from equation (\ref{eq9}) we have
$$
p(\Sigma |y^t)=c_2|\Sigma
|^{-(n_t+2p)/2}\exp(\textrm{tr}(-Q^{-1}\Sigma^{-1/2}S_t \Sigma^{-1/2}/2))
$$
and $n_t=n_{t-1}+1$, where $S_t$ is as in the theorem and the
proportionality constant is $c_2=c_1/p(y_{t}|y_{t-1})$, not
depending on $\Sigma$. Thus $\Sigma |y^t\sim
GIW_p(n_t+2p,Q^{-1},S_t)$ as required. Conditionally on $\Sigma$,
the distribution of $\theta_t$ follows directly from application of
the Kalman filter and this provides the stated posterior
distribution of $\theta_t$.
\end{proof}

From Theorem \ref{th3}, if $W=I_p$, the posterior distribution of
$\Sigma$ is reduced to an inverse Wishart, i.e. $\Sigma
|y^t\sim IW_p(n_t+2p,Q^{-1}S_{t-1})$, where now $Q$ is a variance.
In this case Theorem \ref{th3} reduces to the well known variance
learning of the random walk plus noise model of West and Harrison
(1997). For the application of Theorem \ref{th3}, one can use any estimator of
$\Sigma_t|y^t$, e.g. its mode $\widehat{\Sigma}$; here, following the motivation of $\widetilde{\Sigma}$ in page \pageref{eq:est:tilda}, we have used $\widetilde{\Sigma}$ for presentation purposes, and
this is the estimator we have used and tested in Section \ref{examples}.

\subsection{Choice of hyperparameters}\label{section:hyperp}

The hyperparameter $\phi$ can be chosen \emph{a priori}, e.g. the
application may require a local level model so that $\phi=1$. The covariance matrix $W$ can be optimized by indirect maximization of the log-likelihood
function, which using the prediction decomposition can be expressed
as
\begin{eqnarray}
\ell(W;y^N) &=& \log p(y_1)+\sum_{t=2}^N \log p(y_t|y^{t-1}) \nonumber \\ &=&
c+\frac{1}{2}\sum_{i=1}^N (c_1+t)\log |S_{t-1}| -\frac{1}{2}\sum_{t=1}^N (c_1+t+1)\log |e_te_t'+S_{t-1}|\nonumber \\ &=& c +\frac{(c_1+1)\log |S_0|}{2} -\frac{(c_1+N+1)\log |S_N|}{2}, \label{logl}
\end{eqnarray}
where $c=\sum_{t=1}^N \log \Gamma_p\{(n_0+p+t-1)/2\} - \sum_{t=1}^N \log \Gamma_p\{(n_0+p+t-2)/2\} - 2^{-1}Np \log \pi $ and $c_1=n_0+p-2$. Maximizing the above likelihood is equivalent of minimizing $\log|S_N|=|S_0+\sum_{t=1}^Ne_te_t'|$, given that $S_0$ is bounded. In this objective function to be minimized, $W$ is involved in $e_t$ via the recursion of $m_t$. To simplify notation we consider $\phi=1$. From Theorem \ref{lem:limit}, $W$ is obtained as a function of $P$ as $W=(I_p-P)^{-1}P^2$, given that $I_p-P$ is non-singular. Thus we propose finding $W$ which maximizes the log-likelihood function, conditional on a value of $\Sigma$; this conditioning is proposed for simplification reasons. Instead of working with $W$, we work with $P$, because from the above relationship, we can calculate $W$ from $P$. Even with these simplifications in place $P$ that minimizes $\log|S_N|$, given $\Sigma$ can not be obtained by direct differentiation. Thus we use a Newton-Raphson method to achieve this.

We start by writing recurrently $m_t$ from Theorem \ref{th3} as
\begin{eqnarray*}
m_t&=&m_{t-1}+\Sigma^{1/2}P\Sigma^{-1/2}(y_t-m_{t-1})=(I_p-\Sigma^{1/2}P\Sigma^{-1/2})m_{t-1}+\Sigma^{1/2}P\Sigma^{-1/2}y_t \\ &=& \Sigma^{1/2}(I_p-P)^t\Sigma^{-1/2}m_0+\sum_{i=0}^{t-1}\Sigma^{1/2}(I_p-P)^iP\Sigma^{-1/2}y_{t-i}
\end{eqnarray*}
and then by ignoring the first term (which is justified if $m_0=0$ or if the eigenvalues of $I_p-P$ lie inside the unit circle), we obtain $\log|S_N|$ as
\begin{eqnarray}
\log|S_N|&=&\log\bigg|S_0+\sum_{t=1}^Ny_ty_t'-\sum_{t=2}^N\sum_{i=0}^{t-2}y_ty_{t-1-i}'\Sigma^{-1/2}P(I_p-P)^i\Sigma^{1/2} \nonumber \\ && -\sum_{t=2}^N\sum_{i=0}^{t-2}\Sigma^{1/2}(I_p-P)^iP\Sigma^{-1/2}y_{t-1-i}y_t' \nonumber \\ && +\sum_{t=2}^N \sum_{i=0}^{t-2} \Sigma^{1/2}(I_p-P)^iP
\Sigma^{-1/2}y_{t-1-i}y_{t-1-i}'\Sigma^{-1/2}P(I_p-P)^i\Sigma^{1/2} \bigg|. \label{sn1}
\end{eqnarray}
Since $W$ or $P$ do not depend on $\Sigma$, we proceed by estimating $P$ independently of $\Sigma$, as if $\Sigma$ were proportional to $I_p$. With this in place, using the chain rule of matrix differentiation (Harville, 2007, \S15.7), we obtain the first partial derivative of \eqref{sn1} as $$
\frac{\partial \log|S_N|}{\partial p_{kl} } =\textrm{tr}\left[\left(\frac{\partial \log |S_N|}{\partial S_N}\right)' \frac{\partial S_N}{\partial p_{kl}}\right] =\textrm{tr}\left[\left(2S_N^{-1}-\textrm{diag}(s_{11,N}^{(-1)}.\ldots,s_{pp,N}^{(-1)})\right)(\Lambda_N+\Lambda_N')\right],
$$
where $\Lambda_N=\sum_{t=2}^N\sum_{i=0}^{t-2}(y_t-(I_p-P)^iPy_{t-1-i})y_{t-1-i}'K_i$, $K_i$ is the partial derivative of $P(I_p-P)^i$, $S_N^{-1}=\big(s_{kl,N}^{(-1)}\big)$, and
$\partial \log|S_N| / \partial p_{kl}$ is the $(k,l)$ element of the matrix derivative $\partial \log|S_N| / \partial P$, for $k,l=1,\ldots,p$. For the calculation of $K_i$ we can see that $K_i=K_{i-1}(I_p-P)-P(I_p-P)^{i-1}K_0$, where $K_0=\partial P/\partial p_{kl}$, which, by defining $u_k$ as the zero $p$-dimensional column vector having a unit in the $k$th place, is equal to $u_ku_k'$ when $k=l$ and it is equal to $u_ku_l'+u_lu_k'$ when $k\neq l$ (Harville, 1997, p. 300). The recursion of $K_i$ follows by using the multiplicative rule of differentiation on the function $P(I_p-P)^i$ and writing $K_i$ as a function of $K_{i-1}$.

For the second derivative we have
$$
\frac{\partial^2 \log|S_N|}{\partial p_{kl}\partial p_{rs}}=\textrm{tr}\left(S_N^{-1}\frac{\partial^2S_N}{\partial p_{kl}\partial p_{rs}}\right) - \textrm{tr}
\left(S_N^{-1}\frac{\partial S_N}{\partial p_{kl}}S_N^{-1}\frac{\partial S_N}{\partial p_{rs}}\right)
$$
From before we know $\partial S_N /\partial p_{kl}=\Lambda_N+\Lambda_N'$ and so $\partial^2S_N/\partial p_{kl}\partial p_{rs}=\partial \Lambda_N/\partial p_{rs}+\partial \Lambda_N'/\partial p_{rs}$. Thus
$$
\frac{\partial\Lambda_N}{\partial p_{rs}}=\sum_{t=2}^N\sum_{i=0}^{t-2}\left[y_t-(I_p-P)^iPy_{t-1-i}\right]y_{t-1-i}'\frac{\partial K_i}{\partial p_{rs}} -\sum_{t=2}^N\sum_{i=0}^{t-2}K_i'y_{t-1-i}y_{t-1-i}'K_i,
$$
where from the recursion of $K_i$ we have
$$
\frac{\partial K_i}{\partial p_{rs}} = \frac{\partial K_{i-1}}{\partial p_{rs}} (I_p-P)-2K_{i-1}K_0
$$
with $\partial K_0/\partial p_{rs}=0$. This completes the first and second partial derivatives of $\log|S_N|$ with respect to elements of $P$. Then the Newton-Raphson method at each time $t$ and for iterations $j=1,2,\ldots$, approximates the true minimum $\hat{P}$ by $\hat{P}^{(j)}$, using the formula
\begin{equation}\label{NR1}
\textrm{vec}(\hat{P}^{(j)}) = \textrm{vec}(\hat{P}^{(j-1)}) + \left( \frac{\partial^2 \log |S_N| }{ \partial \textrm{vec}(P) \partial \textrm{vec}(P)' } \right)^{-1}\bigg|_{P=\hat{P}^{(j-1)}} \frac{\partial \log |S_N|}{\partial \textrm{vec}(P)} \bigg|_{P=\hat{P}^{(j-1)}}
\end{equation}
where $\hat{P}^{(0)}=I_p$ and $\textrm{vec}(\cdot)$ denotes the column stacking operator of an unrestricted matrix. Under some regularity conditions (Shumway and Stoffer, 2006, \S6.3), the algorithm converges to the true minimum $\hat{P}$. Convergence is assumed at iteration $j$, for which $\parallel \hat{P}^{(j)}-\hat{P}^{(j-1)}\parallel_2\leq Tol$, for some small tolerance value $Tol$, where $\parallel \cdot \parallel_2$ denotes the Frobenius norm or distance; similar stoppage rules are discussed in Shumway and Stoffer (2006, \S6.3). Note that typically not many iterations are needed for convergence, although this may depend on the specific application and on the dimension on the data; for the examples in the next section we have used $Tol=0.001$.

An alternative approach is to consider indirect optimization of the conditional log-likelihood function using the Expectation Maximization (EM) algorithm. Some recursive type or on-line version of the EM algorithm is possible (not discussed further in this paper), but if the reader is more familiar with the typical off-line EM algorithm described in Koopman (1993) and Shumway and Stoffer (2006, \S6.3), such an approach would prevent the application of real-time estimation.

Finally, we discuss the specification of $W$ using discount factors (West and Harrison, 1997, Chapter 6). According to this, we introduce $p$ (not necessarily distinct) discount factors $\delta_1,\ldots,\delta_p$, forming a discount matrix $\Delta=\textrm{diag}(\delta_1,\ldots,\delta_p)$. The idea of this specification is that the prior covariance matrix $\var(\theta_t|\Sigma,y^{t-1})=\phi^2\Sigma^{1/2}\Delta^{-1/2}P\Delta^{-1/2}\Sigma^{1/2}$ is increased compared to $\var(\theta_{t-1}|\Sigma,y^{t-1})$, reflecting on the increased uncertainty or loss of information going from $t-1$ to $t$, prior to observing $y_t$. From the above, the expression of $P$ in Theorem \ref{lem:limit} and by equating $\phi^2P+W=\phi^2\Delta^{-1/2}P\Delta^{1/2}$, we obtain the matrix equation $(W^2+4W)^{1/2}+W=\Delta^{-1/2}(W^2+W)^{1/2}-\Delta^{-1/2}W\Delta^{-1/2}$. For $\delta_1=\ldots=\delta_p=\delta$ (known as single discounting), the solution of this equation is $W=\delta^{-1}(1-\delta)^2I_p$, which is proportional to $I_p$, and so, in this case, the GIW distribution reduces to an IW, as discussed in section \ref{s2s1}. In the general case, it can be shown that the solution of the above matrix equation yields $W$ to be diagonal (but not necessarily proportional to $I_p$), i.e. $W=\textrm{diag}(\delta_1^{-1}(1-\delta_1)^2,\ldots,\delta_p^{-1}(1-\delta_p)^2)$. However, still there remains the problem of the specification of the discount factors. A commonly adopted approach, is to include the discount factors to the likelihood function, and to maximize it with respect to them, but this takes us again back to the indirect maximization procedure. In this paper, we favour the Newton-Raphson methodology as described above, but we do recognize its limitations, in particular regarding high dimensional data where the inversion of the Hessian matrix may be difficult or even impossible. In such cases a suitable approach involving discount factors may be favoured.

\subsection{Time-varying covariance matrices}

So far our discussion has been focused on situations where $\Sigma$, the conditional covariance matrix of $y_t$ is time-invariant. However, in many situations, in particular in finance, this is not the case. For example consider that $y_t$ denotes the logarithm of the price of $p$ assets, or the logarithm of $p$ foreign exchange rates. It is evident that model (\ref{model2}) would not be an appropriate model to consider as $\Sigma$, interpreted here as the volatility of $y_t$, should be time-varying. We can thus extend model (\ref{model2}) by replacing $\Sigma$ by a time-varying $\Sigma_t$ and including a stochastic process to describing the evolution of $\Sigma_t$. For the volatility covariance matrix $\Sigma_t$, we propose a
multiplicative stochastic law of its precision $\Sigma_t^{-1}$, i.e.
\begin{equation}\label{evol}
\Sigma_{t}^{-1}=k\U(\Sigma_{t-1}^{-1})'B_t\U(\Sigma_{t-1}^{-1}),\quad
t=1,\ldots,N,
\end{equation}
where $k=\{\delta(1-p)+p\}\{\delta(2-p)+p-1\}^{-1}$, for a discount
factor $0<\delta\leq 1$, and $\U(\Sigma_{t-1}^{-1})$ denotes the unique
upper triangular matrix based on the Choleski decomposition of
$\Sigma_{t-1}^{-1}$. Here $B_t$ is a $p\times p$ random matrix
following the multivariate singular beta distribution
$B_t\sim B(m/2,1/2)$, where $m=\delta(1-\delta)^{-1}+p-1$. The motivation behind the above evolution has been discussed in the literature, see e.g. Uhlig (1994, 1997). Here $k,\delta,m$ are chosen so that a random walk type evolution for $\Sigma_t^{-1}$ is preserved, i.e. $E(\Sigma_t^{-1}|y^{t-1})=E(\Sigma_{t-1}^{-1}|y^{t-1})$. This model is a generalization of Shephard's local scale models (Shephard, 1994), which were suggested as an alternative to integrated GARCH modelling and which are exploiting the gamma/beta convolution proposed by Smith and Miller (1986).

If we combine Theorems \ref{th:uhlig} and \ref{th3} we can obtain the full estimation of the above model; in brief Theorem \ref{th:uhlig} is responsible for the prior estimation or prediction of $\Sigma_t$, given data $y^{t-1}$ and Theorem \ref{th3} is responsible for the posterior estimation of $\Sigma_t$ given $y^t$ and of the estimation of $\theta_t$ and the prediction of $y_t$. Next we give the result, the proof of which is trivial by the discussion above.

\begin{thm}\label{th3:vol}
In the local level model (\ref{model2}) with a time-varying volatility covariance matrix $\Sigma=\Sigma_t$ and evolution (\ref{evol}), let the initial priors for
$\theta_0|\Sigma_0$ and $\Sigma_0$ be specified as in equation
(\ref{eq8}). The one-step
forecast and posterior distributions are approximately given, for
each $1\leq t\leq N$, as follows:
\begin{enumerate}
\item [(a)] One-step forecast at time $t$: $\Sigma_{t}|y^{t-1}\sim GIW_p(\delta
(1-\delta)^{-1}+2p,Q^{-1},k^{-1} S_{t-1})$ and $y_{t}|y^{t-1}\sim
t_p(\delta(1-\delta)^{-1},m_{t-1},k^{-1}S_{t-1})$, where
$k=(\delta(1-p)+p)(\delta(2-p)+p-1)^{-1}$ and $\delta$, $S_{t-1}$,
$m_{t-1}$ are known at time $t-1$. \item [(b)] Posteriors at time $t$:
$\theta_{t}|\Sigma_{t},y^{t}\sim
N_p(m_{t},\Sigma_{t}^{1/2}P_{t}\Sigma_{t}^{1/2})$ and
$\Sigma_{t}|y^{t}\sim GIW((1-\delta)^{-1}+2p,Q^{-1},S_{t})$, with
$m_{t}=\phi m_{t-1}+A_{t}e_{t}$, $P_{t}=(\phi^2P_{t-1}+W)(\phi^2P_{t-1}+W+I_p)^{-1}$,
$e_{t}=y_{t}-\phi m_{t-1}$, $S_{t}=k^{-1}S_{t-1}+e_{t}e_{t}'$,
where $A_{t}=\Sigma_{t}^{1/2}P_{t}\Sigma_{t}^{-1/2}$ is approximated
by $A_{t}=\widetilde{\Sigma}_t^{1/2} P \widetilde{\Sigma}_t^{-1/2}$, with
$\widetilde{\Sigma}_t$ the estimator of
$\Sigma_{t}|y^{t}$ as in (\ref{eq:est:tilda}), and
$Q_t=P_{t-1}+W+I_p$ being approximated by its limit
$Q=P+W+I_p$, where $P$ is given by Theorem \ref{lem:limit}.
\end{enumerate}
\end{thm}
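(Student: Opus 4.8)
The plan is to repeat, essentially verbatim, the inductive argument that proves Theorem~\ref{th3}, inserting one extra step in which the stochastic evolution (\ref{evol}) of $\Sigma_t^{-1}$ is processed by Theorem~\ref{th:uhlig}. The induction is carried on the posterior law of the volatility matrix: assume $\Sigma_{t-1}\mid y^{t-1}\sim GIW_p((1-\delta)^{-1}+2p,Q^{-1},S_{t-1})$, the base case $t-1=0$ being the prior (\ref{eq8}) with $n_0=(1-\delta)^{-1}+2p$. From this I must derive the two forecast statements of (a) and the two posterior statements of (b), with (b) for $\Sigma_t$ reproducing the hypothesis one step later so the induction closes.

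\emph{Evolution step.} Dualising the hypothesis through the $GIW$--$GW$ correspondence of Section~\ref{s2s1} (if $X\sim GIW_p(n,A,S)$ then $X^{-1}\sim GW_p(n-p-1,A^{-1},S^{-1})$) gives $\Sigma_{t-1}^{-1}\mid y^{t-1}\sim GW_p((1-\delta)^{-1}+p-1,Q,S_{t-1}^{-1})$. I then apply Theorem~\ref{th:uhlig} with $H=\Sigma_{t-1}^{-1}$, covariance matrices $Q$ and $S_{t-1}^{-1}$, and its parameters $(m,n)$ set equal to $(\delta(1-\delta)^{-1}+p-1,\,1)$, so that the beta factor is exactly the $B_t\sim B_p(m/2,1/2)$ of the model (and the singular-beta formulation of Theorem~\ref{th:uhlig} is precisely what is needed, since here $n=1<p$). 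The arithmetic to check is that $m+n=\delta(1-\delta)^{-1}+p$ coincides with the shape $(1-\delta)^{-1}+p-1$ of $\Sigma_{t-1}^{-1}$, which holds by the identity $\delta(1-\delta)^{-1}+1=(1-\delta)^{-1}$, and that $m>p-1$, which holds because $0<\delta<1$. Theorem~\ref{th:uhlig} then gives $\mathcal{U}(\Sigma_{t-1}^{-1})'B_t\mathcal{U}(\Sigma_{t-1}^{-1})\sim GW_p(\delta(1-\delta)^{-1}+p-1,Q,S_{t-1}^{-1})$; multiplying by the scalar $k$ and using the elementary rescaling rule $cX\sim GIW_p(n,A,cS)$ for $X\sim GIW_p(n,A,S)$, $c>0$ (immediate from the density and the Jacobian $c^{p(p+1)/2}$, with its obvious $GW$ counterpart) yields $\Sigma_t^{-1}\mid y^{t-1}\sim GW_p(\delta(1-\delta)^{-1}+p-1,Q,kS_{t-1}^{-1})$, and dualising back gives the first claim of (a), $\Sigma_t\mid y^{t-1}\sim GIW_p(\delta(1-\delta)^{-1}+2p,Q^{-1},k^{-1}S_{t-1})$.

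\emph{Observation step.} From here the argument is that of Theorem~\ref{th3} with $S_{t-1}$ replaced by $k^{-1}S_{t-1}$. Conditionally on $\Sigma_t$ the Kalman filter gives $y_t\mid\Sigma_t,y^{t-1}\sim N_p(\phi m_{t-1},\Sigma_t^{1/2}Q_t\Sigma_t^{1/2})$, with $Q_t$ approximated by its limit $Q$; integrating $\Sigma_t$ out against the $GIW_p(\delta(1-\delta)^{-1}+2p,Q^{-1},k^{-1}S_{t-1})$ prior exactly as in the lines around (\ref{eq9}) collapses the determinant factors into a multivariate $t$ density, giving $y_t\mid y^{t-1}\sim t_p(\delta(1-\delta)^{-1},\phi m_{t-1},k^{-1}S_{t-1})$, the second claim of (a). Bayes' theorem multiplies the $GIW$ prior kernel by $|\Sigma_t|^{-1/2}\exp(\tr(-Q^{-1}\Sigma_t^{-1/2}e_te_t'\Sigma_t^{-1/2}/2))$, raising the shape by one, $\delta(1-\delta)^{-1}+2p+1=(1-\delta)^{-1}+2p$, and updating the scale to $S_t=k^{-1}S_{t-1}+e_te_t'$, so that $\Sigma_t\mid y^t\sim GIW_p((1-\delta)^{-1}+2p,Q^{-1},S_t)$, which is the hypothesis at time $t$. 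Finally, conditionally on $\Sigma_t$ the Kalman recursion for the state yields $\theta_t\mid\Sigma_t,y^t\sim N_p(m_t,\Sigma_t^{1/2}P_t\Sigma_t^{1/2})$ with the stated $m_t,P_t,e_t$, and the two approximations of Theorem~\ref{th3}, namely evaluating $A_t$ at $\widetilde{\Sigma}_t$ via (\ref{eq:est:tilda}) and replacing $Q_t$ by $Q$, are imported unchanged.

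The one genuinely delicate point is the constant-matching in the evolution step: one must verify that the triple $(k,m,\delta)$ hard-wired into (\ref{evol}) is exactly what makes Theorem~\ref{th:uhlig} applicable \emph{and} leaves the family $GIW_p(\cdot,Q^{-1},\cdot)$ invariant with the advertised parameters, which reduces to the identity $\delta(1-\delta)^{-1}+1=(1-\delta)^{-1}$ and to $k$ acting cleanly on the scale slot. A secondary gap, which ``trivial by the discussion above'' glosses over, is that the conditional law of $\theta_{t-1}$ in the induction hypothesis is indexed by $\Sigma_{t-1}$ whereas the Kalman step at time $t$ needs it indexed by $\Sigma_t$; since $\Sigma_t$ enters only through the multiplicative rescaling of the state covariance, this is the same order of approximation already tolerated elsewhere in the paper, and I would record it as such rather than attempt to make it exact.
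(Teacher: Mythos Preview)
Your proposal is correct and follows precisely the approach the paper sketches: the paper's proof is a single sentence pointing to Theorems~\ref{th:uhlig} and~\ref{th3}, and you have carried out exactly that combination, with the inductive hypothesis on $\Sigma_{t-1}\mid y^{t-1}$, the $GIW$--$GW$ dualisation plus Theorem~\ref{th:uhlig} for the evolution step, and the argument of Theorem~\ref{th3} for the observation step. Your parameter checks (in particular $\delta(1-\delta)^{-1}+1=(1-\delta)^{-1}$ and the scalar-rescaling rule for the $GIW$/$GW$ families) are the right ones, and your remark about the mismatch between conditioning on $\Sigma_{t-1}$ versus $\Sigma_t$ for the state distribution correctly identifies an approximation that the paper's ``trivial'' leaves implicit.
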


Some comments are in order. First note that if we set $\delta=1$, then $k=1$ and $B_t=1$ with probability 1 and Theorem \ref{th3:vol} is very similar to Theorem \ref{th3}, the only difference being that the finite $n_t$ in Theorem \ref{th3} becomes $\infty$ in Theorem \ref{th3:vol} and this means that the $t$ distribution of $y_t|y^{t-1}$ practically becomes a normal distribution under Theorem \ref{th3:vol}. Another point refers to the suitability of the evolution (\ref{evol}) and the related local level model. Multivariate stochastic volatility models that allow for $y_t$ and/or for $\Sigma_t$ to follow a vector or matrix autoregressive processes have been proposed in the literature (Chib {\it et al.}, 2006; Philipov, 2006; Maasoumi and McAleer, 2006), but they have to rely on simulation-based methods, typically on Markov chain Monte Carlo or on particle filters, and they may not be suitable for real-time prediction of high dimensional data. Such a demand has recently become more and more prevalent as hedge funds and other investment boutiques require reliable automatic forecasting procedures that are suitable for algorithmic statistical arbitrage (Montana {\it et al.}, 2009). In this direction the above algorithm offers an option, which extends a series of papers in this area, see e.g. Quintana and West (1987), Quintana {\it et al.} (2003), Soyer and Tanyeri (2006), Carvalho and West (2007) and references therein.

\section{Illustrations}\label{examples}

In this section we report on Monte Carlo experiments, in order to compare the performance of the proposed algorithm with existing estimation procedures, and also we present an application to multivariate control charting.

\subsection{Monte Carlo experiments}\label{montecarlo}

We have generated realizations of observation and evolution covariance matrices $\Sigma$ and $\Omega$ according to the following scheme: for each covariance matrix, first we generate independently $p(p+1)/2$ correlations from a beta distribution and we multiply them by $+1$ or $-1$ generated by a bernoulli distribution with probability 1/2. Next we generate independently $p$ variances from a gamma distribution, and then we use the correlation decomposition of the covariance matrix, i.e. $\Sigma=V C V$, where $V$ is the diagonal matrix with elements the square roots of the $p$ simulated variances and $C$ is the correlation matrix with off-diagonal elements the $p(p+1)/2$ simulated correlations and with units in the main diagonal.

With this scheme in place we have performed a Monte Carlo study, over a set of 100 simulated $p$-variate time series vectors according to the local level model (\ref{model2}) with $\phi=1$ and for three time series lengths $N=100,500,1000$. We have considered $p=10,50,100$, covering from low to relatively high dimensional time series, and for their estimation we contrast the algorithm of the previous section (this model is referred to as GIW) with a local level model where the observation covariance matrix is estimated via an inverted Wishart distribution (this model is referred to as IW), a local level model where both of $\Sigma$ and $\Omega$ are estimated using the EM algorithm of Shumway and Stoffer (1982) (this model is referred to as EM), and the local level model using the true simulated values of $\Sigma$ and $\Omega$ (this model is referred to as Kalman). For the IW model $\Sigma$ was estimated by assuming an inverse Wishart prior and $\Omega=w\Sigma$, where $w$ was estimated by direct maximum likelihood methods as in Harvey (1986, 1989). For all models we used the priors \eqref{eq8} with $m_0=(0,\ldots,0)'$ and $p_0=1000$, the latter of which reflects on a weakly informative or vague prior specification for $\theta_0$. Also, for both the IW and GIW models we used the prior \eqref{eq8} for $\Sigma$, the difference being that when $W$ is a covariance matrix (for the GIW) this prior becomes $\Sigma\sim GIW_p(n_0,Q^{-1},S_0)$, while when $W=w$ is s scalar variance (for the IW), this prior reduces to $\Sigma\sim IW_p(n_0,Q^{-1}S_0)$; for both cases $n_0=1/100$ and $S_0=I_p$. For the estimation of the GIW, at each time $t$, $W$ is estimated by the Newton-Raphson method of section \ref{section:hyperp}.

Table \ref{table1} reports on the average mean of squared standardized one-step forecast errors (MSSE), which if the fit is perfect should be equal to the unit. Here, due to the high dimensions considered, we witness the quality of the estimation of $\Sigma$ and $\Omega$ via the accuracy of the one-step forecast covariance matrix for each model. First of all we note that the values of the MSSE for the Kalman model are nearly equal to one and clearly this model is the benchmark or the gold standard here, but artificial as in practice we will not know these covariance matrices. We observe that the GIW produces consistent results, outperforming the IW, and producing MSSE close to the gold standard. In comparison with the EM we observe that at low dimensions $p$ and for small values of $N$, the GIW is better, although as $N$ increases the performance of EM is improved and for $N=1000$ the EM model produces marginally better results than the GIW. The improved performance of the EM model at $N=1000$ is expected as it is well known that, under certain conditions, the EM estimators of $\Sigma$ and $\Omega$ converge to their true values (Shumway and Stoffer, 2006, \S6.3). But as $p$ increases we observe a deterioration in the performance of the EM model as compared to the GIW; in particular for $p=100$ and $N=100$ we still obtain reasonable performance with the GIW model, while both the EM and IW models clearly overestimate $\Sigma$ and $\Omega$. Here it should be noted that in our setup both the IW and the EM models are aimed at off-line application, since they need the whole data path $y^N$ for the computation of maximum likelihood estimates. For the EM algorithm we used the convergence criterion used in Shumway and Stoffer (2006, p. 345) that convergence is assumed when the likelihood function does not change by more than 0.001. However, this has resulted in slower algorithms, in particular at the higher dimensions considered here. For a single model, the algorithm of the GIW run in 1 minute and 31 seconds (for $p=10$ and $N=100$) and in 3 minutes and 19 seconds (for $p=100$ and $N=1000$); the respective results for the other models were, for the IW 41 seconds (for $p=10$ and $N=100$) and 1 minute and 43 seconds (for $p=100$ and $N=1000$), for the EM 1 minute and 47 seconds (for $p=10$ and $N=100$) and 3 minutes and 53 seconds (for $p=100$ and $N=1000$), and for the Kalman 11 seconds (for $p=10$ and $N=100$) and 55 seconds (for $p=100$ and $N=1000$). The experiments were run on an Intel(R) Celeron(R) M processor 1.60GHz, 504MB of RAM and the software we used was the freeware R, version 2.9.1, downloadable from {\tt http://www.r-project.org/}.

\begin{table}[t]
\caption{Performance of the algorithm of the previous section (GIW), against the local level models using the inverse Wishart distribution (IW), using estimates of the covariances matrices from the EM algorithm (EM), and using the true covariance matrices (Kalman). Shown are the Monte Carlo averages of the mean of standardized one-step ahead forecast errors for each model and their respective standard errors, given in brackets.}\label{table1}
\begin{center}
\begin{tabular}{|rr|rrrr|}
\hline  $p$ & $N$ & GIW & IW & EM & Kalman \\  \hline  10 & 100 & 0.983 (0.002) & 0.963 (0.003) & 0.972 (0.050) & 0.999 (0.000) \\ & 500 & 0.995 (0.002) & 0.975 (0.003) & 0.996 (0.003) & 0.998 (0.000) \\ & 1000 & 0.997 (0.000) & 0.988 (0.001) & 0.998 (0.001) & 1.001 (0.000) \\ \hline \hline 50 & 100 & 0.969 (0.001) & 0.911 (0.001)  & 1.060 (0.003) & 0.998 (0.001) \\ & 500 & 0.985 (0.004) & 1.045 (0.004) & 1.066 (0.002) & 1.002 (0.000) \\ & 1000 & 1.011 (0.003) & 1.039 (0.001) & 1.009 (0.002) & 1.002 (0.000) \\ \hline \hline 100 & 100 & 0.969 (0.001) & 0.899 (0.001)  & 1.160 (0.002) & 0.995 (0.001) \\ & 500 & 0.972 (0.003) & 1.074 (0.002) & 1.082 (0.001) & 1.003 (0.000) \\ & 1000 & 1.005 (0.001) & 1.032 (0.001) & 1.004 (0.003) & 1.001 (0.000) \\ \hline
\end{tabular}
\end{center}
\end{table}

To mark the quality of the estimation for the GIW model, Figure \ref{fig1a} plots the Frobenius distance of the estimated $\Sigma$ at each time point ($t=1-1000$) from its true simulated value, for $p=10,50,100$. We note that in all three cases the algorithm achieves an upper bound 0.008 quite quickly. The distances of $p=50$ and $p=100$ are much more volatile in comparison to the distance of $p=10$, but all eventually converge. The means of the three distances were 0.0050, 0.0053 and 0.0054 respectively and their respective variances were $6.25\times 10^{-7}$, $2.16\times 10^{-6}$ and $2.60\times 10^{-6}$, respectively. The respective distances of the estimated $W$ follow a similar pattern to that of Figure \ref{fig1a} and their accurate estimation appears to be an important element of the successful estimation of $\Sigma$.

 \begin{figure}[t]
 \epsfig{file=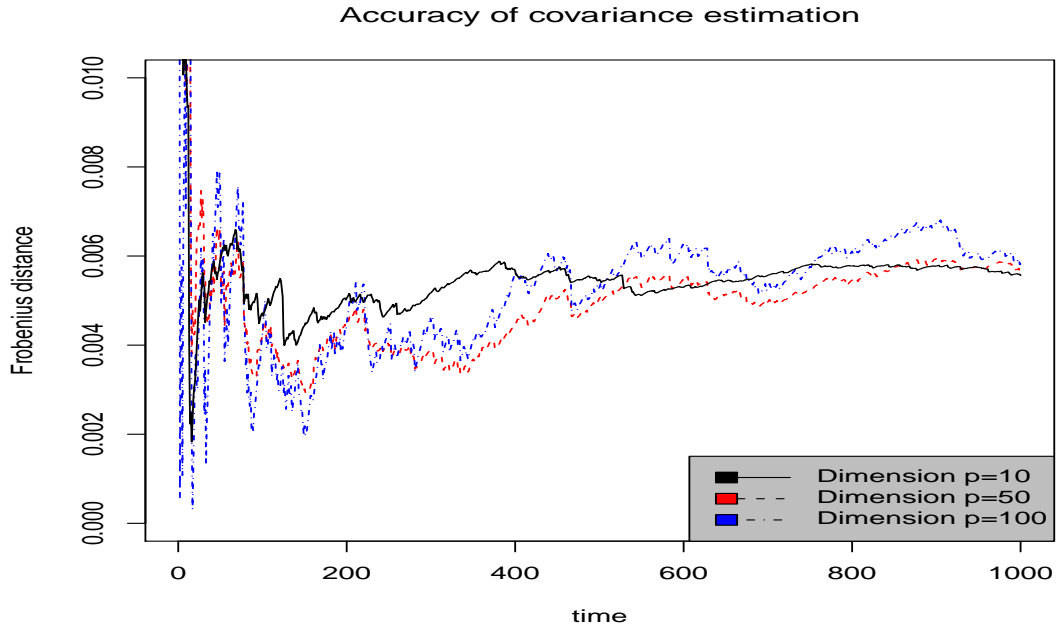, height=9cm, width=15cm}
 \caption{Frobenious distance over time of the GIW model using simulation from a single iteration. Shown are the three distances
 corresponding to simulations with $p=10,50,100$.}\label{fig1a}
\end{figure}

\subsection{Multivariate control charts}\label{controlcharts}

 In this section we consider a multivariate control charting scheme for autocorrelated data (Bersimis {\it et al.}, 2007). Typically multivariate control charts focus on the detection of signals of multivariate processes, which may exhibit out of control behaviour, defined as deviating from some prespecified target mean vector and a target covariance matrix. The Hotteling $T$ chart is the standard control chart as it is capable of detecting out of control signals of the joint effects of the variables of interest.

 \begin{figure}[t]
 \epsfig{file=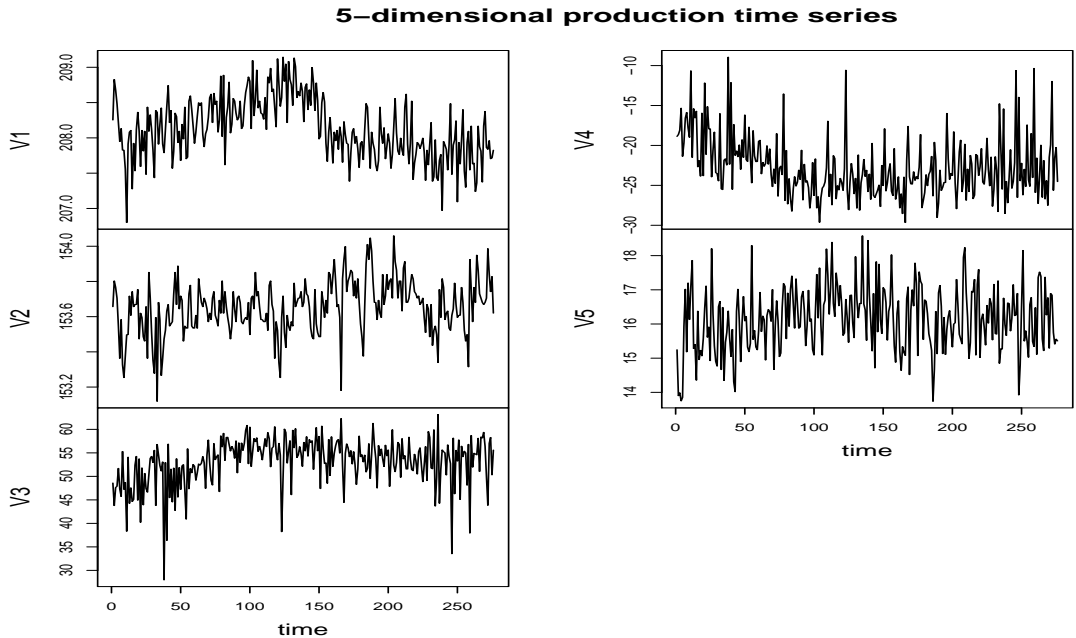, height=9cm, width=15cm}
 \caption{Production time series data.}\label{fig1}
\end{figure}

 However, many authors have pointed out that in the presence of autocorrelation, this chart is a poor performer (Vargas, 2003). As a result over the past decade researchers have focused considerable efforts on to the development of control charts for multivariate time series data (Bersimis {\it et al.}, 2007). Pan and Jarrett (2004) point out the importance of accurate estimation of the observation covariance matrix and they study the effects its miss-specification has in the detection of out of control signals. These authors suggest using the $T$ chart as above, after estimating the covariance matrix deploying some suitable time series method.

The multivariate local level model is a natural candidate model for the above situation, as it is a generalization of the popular Shewhart-Deming model, according to which the observed data $y_t$ are modelled as noisy versions of a constant level $\theta$, or $y_t=\theta+\epsilon_t$, where $\epsilon_t\sim N_p(0,\Sigma)$. This model is valid for serially uncorrelated data, but it is clearly not suitable for time series data. In this context, the motivation for the local level model is that the level of the time series at time $t$, $\theta_t$, follows a slow evolution described by a random walk. Using this model and considering an inverted Wishart distribution for $\Sigma$, Triantafyllopoulos (2006) proposes that first the one-step forecast distribution is sequentially produced, then the logarithm of the Bayes factors of the current forecast distribution against a prespecified target distribution forms a new univariate non-Gaussian time series, which control chart is designed using the modified exponentially weighted moving average (EWMA) control chart. If the process is on target, then the log Bayes factor (West and Harrison, 1997, \S11.4) will fluctuate around zero and the EWMA control chart will not signal significant deviations from this target. If, on the other hand, the EWMA signals out of control points, this will in turn signal deviations of the original process from its target. In the above reference, the target distribution is chosen to be a multivariate normal distribution, but, depending on experimentation and historical information, other distributions may be selected. As in Pan and Jarrett (2004) and in other studies, a critical stage in the application of this method, is that the estimate of $\Sigma$ and the forecast of $y_t$ are accurate, so that the fitted model is a good representation of the generating process.

We consider data from an experiment of production of a plastic mould the quality of which is
centered on the control of temperature and its variation. For this
purpose five measurements of the temperature of the mould have
been taken, for $276$ time points; for more details on the set up of this
experiment the reader is referred to Pan and Jarrett (2004). From Figure \ref{fig1}, which is a plot of the data, we can argue that this data possesses a local level type evolution. We have applied the above control charting methodology using the local level model with
the GIW distribution. For the model fit we note that the MSSE is $\textrm{MSSE}=[0.952~ 0.966~ 0.985~ 1.110~ 0.994]'$, which marks a much improved performance compared to Pan and Jarrett (2004) and to Triantafyllopoulos (2006); similar improved results (not shown here) apply considering other measures of goodness of fit, e.g. the mean of squared forecast errors and the mean absolute deviation. For the design of the control chart, with a small smoothing factor equal to 0.05 we use the EWMA chart, which control limits are modified from its usual control limits, to accommodate for both the non-Gaussianity of the Bayes factor series and its autocorrelation. Figure \ref{fig2} shows the EWMA control chart, from which we can see the improved behaviour: in Phase I where the model is applied and tested, we see that all EWMA points are within the control limits and in Phase II we see that the model signals a clear out of control behaviour. In contrast to the studies above, our model manages to avoid having out of control signals in Phase I, which reflects on the more accurate estimation of the observation covariance matrix and of the overall fit. In Phase II it shows a deterioration of the process, which is not signaled in Pan and Jarrett (2004) as very few out of control points are detected in that study. We also note that this deterioration can not be detected or suspected by either looking at the time series plot in Figure \ref{fig1} or performing univariate control charts to each of the individual series. For this data set, applying the $T$ control chart after estimating $\Sigma$ using our method and Pan and Jarrett (2004) again favoured our proposal (results not shown here). Finally we report that the improved performance of our chart in Phase I is evident, by noting that the control limits are much tighter as compared to those in Triantafyllopoulos (2006) and thus the deployed fitted model here, is a more accurate representation of the data.

\begin{figure}[t]
 \epsfig{file=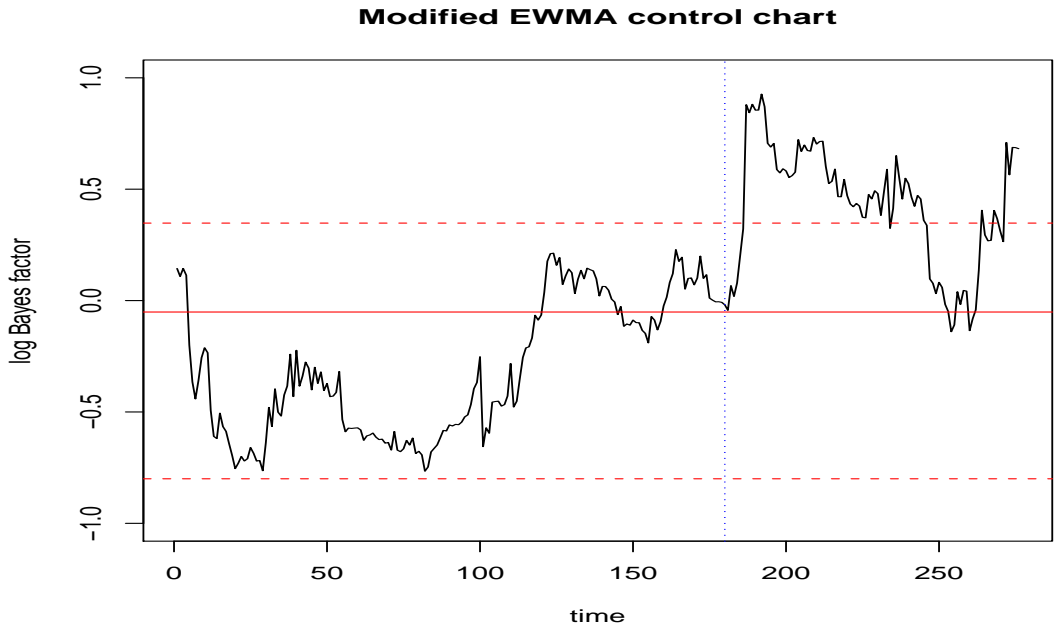, height=9cm, width=15cm}
 \caption{Modified EWMA control chart for the log Bayes factor of the Production process, using a smoothing
 parameter $\lambda=0.05$; the dotted horizontal
 line indicates the target mean $-0.141$ and the dashed horizontal lines indicate the
 control limits; the vertical line separates Phase I (for $t=1-180$) and Phase II
 (for $t=181-276$).}\label{fig2}
\end{figure}

\section{Conclusions}\label{discussion}

In this paper we propose on-line estimation for the multivariate local level model with the focus placed on the estimation of the covariance matrix of the innovations of the model. We criticize the application of the inverse Wishart prior distribution in this context as restrictive and often lacking empirical justification. Motivated from the conjugate model, we generalize the inverse Wishart distribution to account for wider application, but still manage to achieve approximate conjugacy, which is useful for real-time estimation. This approach results in fast recursive estimation, which resembles the Kalman filter, but allowing for covariance learning too. It is shown that our proposal delivers under Monte Carlo experiments and also in comparison with existing methods. An application of multivariate control charts is used to illustrate the proposed methodology. Future research efforts will be devoted on to the application of this methodology to high dimensional data.

\section*{Acknowledgements}

I am grateful to two anonymous referees for providing useful comments on an earlier draft of the paper.

\end{document}